\def\h{{\mathbf h}}
\def\br{{\mathbf r}}
\def\x{{\mathbf x}}
\def\y{{\mathbf y}}
\def\z{{\mathbf z}}
\def\D{{\mathbf D}}
\def\G{{\mathbf G}}
\def\I{{\mathbf I}}
\def\K{{\mathbf K}}
\def\R{{\mathbf R}}
\def\W{{\mathbf W}}
\def\X{{\mathbf X}}
\def\Y{{\mathbf Y}}
\def\1{{\mathbf 1}}
\def\0{{\mathbf 0}}
\renewcommand{\vec}[1]{\boldsymbol{#1}}
\newtheorem{lemma}{Lemma}
\begin{document}
\title{Blind Identification of SIMO Wiener Systems based on Kernel Canonical Correlation Analysis}

\author{Steven~Van~Vaerenbergh,~\IEEEmembership{Member,~IEEE,}
        Javier~V\'ia,~\IEEEmembership{Senior Member,~IEEE,}\\
        and~Ignacio~Santamar\'ia,~\IEEEmembership{Senior Member,~IEEE}
\thanks{The authors are with the Advanced Signal Processing Group, Department
of Communications Engineering, University of Cantabria, Santander 39005,
Spain, e-mail: \{steven,jvia,nacho\}@gtas.dicom.unican.es.}
\thanks{Digital Object identifier 10.1109/TSP.2013.2248004}
}

\markboth{IEEE Transactions on Signal Processing, 2013 (To Appear)}{}

\maketitle

\begin{abstract}
We consider the problem of blind identification and equalization of single-input multiple-output (SIMO) nonlinear channels. Specifically, the nonlinear model consists of multiple single-channel Wiener systems that are excited by a common input signal. The proposed approach is based on a well-known blind identification technique for linear SIMO systems. By transforming the output signals into a reproducing kernel Hilbert space (RKHS), a linear identification problem is obtained, which we propose to solve through an iterative procedure that alternates between canonical correlation analysis (CCA) to estimate the linear parts, and kernel canonical correlation (KCCA) to estimate the memoryless nonlinearities. The proposed algorithm is able to operate on systems with as few as two output channels, on relatively small data sets and on colored signals. Simulations are included to demonstrate the effectiveness of the proposed technique.
\end{abstract}

\begin{IEEEkeywords}
Wiener systems, SIMO nonlinear systems, blind identification, kernel canonical correlation analysis.
\end{IEEEkeywords}

\IEEEpeerreviewmaketitle

\section{Introduction} \label{sec:intro}
\IEEEPARstart{B}{lind} identification and equalization have been active research topics during the last decades. In digital communications, blind methods allow channel identification or equalization without the need to send known training signals, thus saving bandwidth. While a lot of attention has gone to the analysis of linear systems, many real-life systems exhibit nonlinear characteristics. As a result, the field of nonlinear system identification has been studied for many years and still remains a very active research area \cite{Billings80,Sjoberg95nonlinearblackbox,Nelles00,giannakis01bibliography}.

In contrast to linear systems, which can be identified uniquely by their impulse response, there does not exist a corresponding canonical representation for all nonlinear systems. Hence, different approaches are followed to parameterize different subclasses of nonlinear systems, including descriptions such as Volterra \cite{Schetzen80} and polynomial \cite{mathews00_polynomial} systems. While these techniques allow for adequate representations of many nonlinear systems, the number of parameters they require becomes excessive for high degrees of nonlinearity or high input dimensionality. Therefore, several authors have considered approximating the unknown nonlinear systems as simplified block-based models, including Wiener systems \cite{billings77wiener,Pawlak07}, which comprise a cascade of a linear filter and a memoryless nonlinearity; Hammerstein systems \cite{narendra66hammerstein,billings79hammerstein}, which correspond to the inverse configuration; and combinations of both \cite{Bai98,Bershad01,debrabanter2008wienerhammerstein}. We will focus on Wiener systems, which, despite their simplicity, have been used successfully in applications including biomedical engineering \cite{westwick98nonlinear}, control systems \cite{greblicki04wiener}, digital satellite communications \cite{feher83}, digital magnetic recording \cite{sands93}, optical fibre communications \cite{Kawakami07_wiener} and chemical processes \cite{pajunen92wiener}.

A considerable number of techniques have been proposed in recent years to tackle the problem of supervised identification of Wiener systems, both single-input single-output (SISO) systems \cite{Pawlak07,cousseau07wiener, vanvaerenbergh08_kcca_wiener_id, giri2010block} and multiple-input multiple-output (MIMO) systems \cite{westwick1996mimowiener, janczak2007mimowiener, fan2009mimowiener}. Nevertheless, relatively little research has been conducted on the blind identification problem. For SISO Wiener systems, some blind identification techniques have been proposed that make assumptions on the input signal statistics (see \cite[Part IV]{giri2010block}). In particular, in \cite{Gomez07_blind_wiener,vanbeylen09blindwiener} the input signal is required to be independent and identically distributed (i.i.d.) and Gaussian. A less restrictive approach was followed by Taleb et al. in \cite{TalebJutten01_blindwiener}, where the input signal is only required to be i.i.d.

The problem of blind identification of nonlinear single-input multiple-output (SIMO) systems has also been addressed, although only for the class of Volterra models. The SIMO model can be obtained for instance by measuring a single source using a sensor array. In \cite{giannakis97_blind_eq_volterra} it was shown that a finite impulse response (FIR) linear filter can perform zero-forcing (ZF) equalization of SIMO Volterra systems under certain conditions. In \cite{valcarce01_blind_eq_nl} a different technique was proposed for blind equalization of SIMO Volterra models, based on second-order statistics (SOS), that improved several aspects of \cite{giannakis97_blind_eq_volterra}, including computational complexity and robustness. Both methods require at least three output channels to operate. 

\begin{figure}[t]
\centering %
\includegraphics{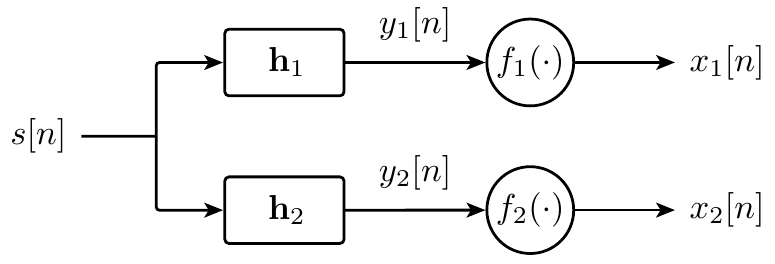}
%
\caption{The block diagram of a SIMO system consisting of two Wiener systems. The proposed method aims to identify the entire system and to estimate $s[n]$ given only $x_1[n]$ and $x_2[n]$.}
\label{fig:blindideqwiener_system_simo2}
\end{figure}

In this paper we will focus on the blind identification and equalization of SIMO Wiener systems, as depicted in Fig.~\ref{fig:blindideqwiener_system_simo2}. We propose a blind technique that requires looser restrictions than blind SISO techniques, and that is able to operate with two or more output channels. The proposed identification approach is based on a well-known technique in blind identification of linear SIMO systems \cite{xu95_ls,Via06_effectiveorder}. For SIMO Wiener systems the blind identification problem is more challenging, as it includes nonlinearities. By drawing on the framework of kernel methods, however, the problem can be linearized. 
Some preliminary results of the proposed method were presented in \cite{vanvaerenbergh08_kccablind}. We extend these results with an identifiability analysis, a general formulation for multiple outputs, a formulation that exploits identical nonlinearities in each channel and more exhaustive numerical experiments.

The rest of this paper is organized as follows: Section \ref{sec:simolinear} gives a brief review of blind identification methods for linear SIMO systems based on SOS. This scenario is extended to SIMO Wiener systems with two output channels in Section \ref{sec:simowiener} and generalized to systems with multiple outputs in Section \ref{sec:ext}. Section \ref{sec:experiments} contains a series of numerical experiments, and the main conclusions of this work are presented in Section \ref{sec:conclusions}.

Throughout this paper the following notation is used: Scalar variables are denoted as lowercase letters, $x$, and vectors as boldface lowercase letters, $\x$, defined as column vectors. Matrices are indicated by boldface uppercase letters, such as $\X$. Square brackets denote the instance of any variable at time $n$, or the $n$-th element of a matrix or vector, $x[n]$, and a hat denotes an estimate of a variable, $\hat\x$.

\section{Blind Identification of Linear SIMO Systems} \label{sec:simolinear}

We start by reviewing the basic blind identification problem of a linear system with two outputs. The extension to multiple outputs is straightforward, as shown in \cite{xu95_ls} and
\cite{Via06_effectiveorder}. The signals used in this paper are real, although the proposed methods can be easily extended for complex signals.

Consider a system that consists of two linear channels $\h_1$ and $\h_2$ that share the same zero-mean input signal, $s[n]$, as depicted in Fig.~\ref{fig:blindideqlinear_simo2_eq}. Assuming FIR channels, the output of the $i$-th channel can be written as
\begin{equation*}
x_i[n] = \sum_{l=0}^{L-1} h_i[l] s[n-l] = h_i[n] \ast s[n],
\end{equation*}
where $\h_i=\left[h_i[0],\dots,h_i[L-1]\right]^T$ denotes the impulse response vector of the $i$-th channel, $L$ is the maximal channel length (which is assumed to be known), and $h_i[n]\ast s[n]$ is the convolution between $\h_i$ and the input signal $s[n]$. This system can be obtained for instance by oversampling a single linear channel when the source signal has some excess bandwidth, which is the bandwidth occupied by the signal beyond the Nyquist frequency $1/2T_s$ (see \cite[Section 9.2.1]{proakis2007}).

\begin{figure}[t]
\centering %
\includegraphics{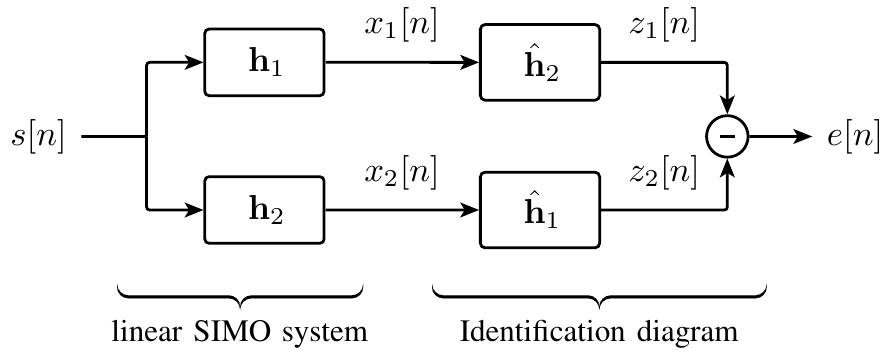}
%
%
%
\caption{A linear SIMO system and the corresponding blind identification diagram. If $\hat\h_1=\h_1$ and $\hat\h_2=\h_2$, the output error $e[n]$ will be zero.} \label{fig:blindideqlinear_simo2_eq}
\end{figure}

The identification method presented by Xu et al. in \cite{xu95_ls}, which is closely related to linear prediction, exploits the commutativity of the convolution, in particular
\begin{equation*}
h_2[n] \ast (h_1[n] \ast s[n]) = h_1[n] \ast (h_2[n] \ast s[n]).
\end{equation*}
This property inspired the design of the identification diagram shown in Fig.~\ref{fig:blindideqlinear_simo2_eq}, which allows to find estimates of the channels,
$\hat{\h}_1$ and $\hat{\h}_2$, by minimizing the following cost function
\begin{equation}
J = \frac{1}{2}\sum_{n=1}^N |z_1[n] - z_2[n]|^2 = \frac{1}{2}\sum_{n=1}^N |e[n]|^2 \label{eq:blindideq_simo_linear1},
\end{equation}
with respect to $\hat{\h}_1$ and $\hat{\h}_2$, where
\begin{equation*}
z_1[n] = \hat h_2[n] \ast x_1[n] = \hat h_2[n] \ast (h_1[n] \ast s[n]),
\end{equation*}
and $z_2[n]$ is constructed in a similar fashion.

In order to solve this minimization problem, we define the data matrix
\begin{equation*}
\X_i =
\begin{bmatrix}
x_i[n+L-1] & \cdots & x_i[n]\\
\vdots & \ddots & \vdots\\
x_i[n+N-1] & \cdots & x_i[n+N-L]\\
\end{bmatrix}, \quad i=1,2.\label{eq:blindideqdatamatrix}
\end{equation*}
By denoting the estimate of the channel impulse response vectors as
\begin{equation*}
\hat \h_i = \left[  \hat h_i[0], \dots, \hat h_i[L-1] \right]^T,
\end{equation*}
it can be easily verified that in a noiseless case the solution should satisfy
\begin{equation}
\X_1 \hat \h_2 = \X_2 \hat \h_1, \label{eq:tong_noiseless}
\end{equation}
as illustrated in the identification diagram of Fig.~\ref{fig:blindideqlinear_simo2_eq}. Correct identification is guaranteed when the channels $\h_i$ do not share any common zeros and the linear complexity of the input sequence is sufficiently high \cite{xu95_ls}. For real-world signals this is generally satisfied, see [31].

In case the outputs $x_i[n]$ are corrupted by additive noise, Eq.~\eqref{eq:tong_noiseless} cannot be fulfilled in general, and the optimal filters $\hat\h_1$ and $\hat\h_2$ need to be determined by solving an optimization problem. In order to avoid the zero-solution $\hat \h_i = \0$, either the norm of the filters $\hat \h_i$ or the norm of the output signal $\X_i \hat \h_j$ is typically fixed. A restriction on the filter norm was used in \cite{xu95_ls} to develop a least-squares (LS) method. With this restriction, the minimization problem \eqref{eq:blindideq_simo_linear1} becomes
\begin{equation*}
\begin{aligned}
& \underset{\hat{\h}_1,\hat{\h}_2}{\text{minimize}} & &  \frac{1}{2} \left\| \X_1 \hat \h_2 - \X_2  \hat \h_1 \right\|^2 \\
& \text{subject to} & & \| \hat \h_1 \|^2 + \| \hat \h_2 \|^2 = 1.
\end{aligned} \label{eq:blindideq_xu_cost1}
\end{equation*}
Its solution is obtained by solving the eigenvalue problem
\begin{equation}
\begin{bmatrix}
\X_2^T \X_2 & -\X_2^T \X_1\\
-\X_1^T \X_2 & \X_1^T \X_1
\end{bmatrix}
\hat \h = \rho \hat \h, \label{eq:blindideqls_tong}
\end{equation}
in which $\hat \h = [\hat \h_1^T, \hat \h_2^T]^T$ is found as the eigenvector corresponding to the smallest eigenvalue.

If a constraint is applied on the output signal energy (as in \cite{Via06_effectiveorder}), the following optimization problem is obtained
\begin{equation}
\begin{aligned}
& \underset{\hat{\h}_1,\hat{\h}_2}{\text{minimize}} & &  \frac{1}{2} \left\| \X_1 \hat \h_2 - \X_2  \hat \h_1 \right\|^2 \\
& \text{subject to} & & \| \X_1 \hat \h_2 \|^2 = \| \X_2 \hat \h_1 \|^2 = 1.
\end{aligned} \label{eq:cca1}
\end{equation}
Problem \eqref{eq:cca1} is a canonical correlation analysis (CCA) problem, whose solution is given by the principal eigenvector of the following generalized eigenvalue problem (GEV) (see \cite{Hotelling36}).
\begin{equation}
\begin{bmatrix}
\0 & \X_2^T \X_1\\
\X_1^T \X_2 & \0
\end{bmatrix}
\hat \h = \rho
\begin{bmatrix}
\X_2^T \X_2 & \0\\
\0 & \X_1^T \X_1
\end{bmatrix}
\hat \h. \label{eq:blindideqgev_cca}
\end{equation}
Note that both the LS and CCA-based algorithms require knowledge of the maximum channel length $L$. More generally, the following assumptions are required in order to guarantee identifiability \cite{xu95_ls}:
\begin{enumerate}
\item[A1.] The linear channels $\h_i$ are coprime, i.e. they do not share any common zeros.
\item[A2.] The linear complexity of the input signal is at least $2L+1$, where $L$ is the maximum length of the linear channels.
\end{enumerate}

Once the channels $\hat \h_1$ and $\hat \h_2$ have been estimated by solving either one of the eigenvector problems \eqref{eq:blindideqls_tong} or \eqref{eq:blindideqgev_cca}, system equalization can be performed by applying the zero-forcing (ZF) or the minimum mean-square error (MMSE) algorithm. For the proposed technique we choose to work with the constraint based on the output signal energy, and its corresponding CCA formulation \eqref{eq:blindideqgev_cca}, since it reduces the noise enhancement problem, especially in the case of colored signals or a small number of observations \cite{via07cca_eq}.

\section{Blind Identification and Equalization of SIMO Wiener Systems}
\label{sec:simowiener}

The problem of interest consists of nonlinear SIMO system identification, in which each channel is modeled as a Wiener system. This model can be obtained by using a sensor array at the receiving end, given that each individual sensor allows to be represented as a Wiener system. In accordance to the nomenclature used in the literature we call this system a SIMO Wiener system. Fig.~\ref{fig:blindideqwiener_system_simo2} displays a system with two outputs, as encountered for instance in \cite{Chen95_Parallel_Nonlinear_Systems}. The output of the $i$-th channel is obtained as
\begin{equation*}
x_i[n] = f_i\left(\sum_{l=0}^{L-1} h_i[l] s[n-l]\right).
\end{equation*}
We will restrict the nonlinearities $f_i(\cdot)$ to be monotonic and invertible in this work, since the proposed identification method is based on estimating the inverse nonlinearities. This restriction is fulfilled in many practical scenarios, for instance when the nonlinearities are modeled as saturating nonlinearities (which is the case for saturating amplifiers, limit switch devices in mechanical systems and overflow valves among others --- see examples in \cite{giri2010block}).

Before describing the details of the proposed method we discuss the identifiability conditions of this system.

\subsection{Blind identifiability} \label{sec:identifiability}

We start by pointing out some ambiguities that need to be taken in mind when identifying a SIMO Wiener system. 
Throughout this discussion it is understood that any identification solution is only given up to a set of scalar constants, which represent scalings of its unknown internal signals $y_i[n]$ and its source signal $s[n]$. Furthermore, the linear channels of the SIMO Wiener system should be of length $L>1$. A system with $L=1$ represents a degenerate case, as it is impossible to identify its nonlinearities: For instance, any monotonic transformation $\theta(\cdot)$ of its source signal would allow to construct a different SIMO Wiener system that has different nonlinearities, $f_i(\theta^{-1}(\cdot))$, while having the same output signals.

An important observation is that the described system is \emph{not} identifiable in general when the input signal is of finite length. 
In order to prove this statement we will make use of the concept of \emph{amplitude order}, based on order statistics: Given a sequence of samples, we define the amplitude order as the order of these samples when they are sorted ascendingly, where samples with identical values are given the same order. For instance, the amplitude order of the sequence $[1, 4, 3, 3]$ is $[1, 3, 2, 2]$. An interesting property is that if two sequences have the same amplitude order there exists a monotonic function $\theta(\cdot)$ that transforms one sequence into the other one.
\begin{lemma}
A SIMO Wiener system with a monotonic invertible nonlinearity is not identifiable in general if its input signal $s[n]$ is of finite length.
\label{lemma:nonid}
\end{lemma}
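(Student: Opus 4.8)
The plan is to show that \emph{every} SIMO Wiener system driven by a finite-length input admits a continuum of other SIMO Wiener systems --- with linear channels that are \emph{not} mere rescalings of the original ones --- producing exactly the same output sequences; non-identifiability is then immediate. The engine of the construction is the property noted just before the lemma: two finite sequences with the same amplitude order are related by a monotonic function.

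First I would fix an arbitrary system, with channels $\h_i$ of length $L>1$, monotonic invertible nonlinearities $f_i$, and a finite-length input $s[n]$, and set $y_i[n]=h_i[n]\ast s[n]$, $x_i[n]=f_i(y_i[n])$. Because $s[n]$ has finite support and the channels are FIR, only finitely many samples of each $y_i[\cdot]$ are non-constant, and only these enter the observed outputs; I would assume (generically, and a single counterexample suffices for a ``not identifiable in general'' statement) that these relevant samples of $y_i[\cdot]$ take pairwise distinct values. Keeping $s[n]$ fixed, I would then perturb each channel, $\h_i\to\tilde\h_i$. Since the amplitude order of $\tilde y_i[n]=\tilde h_i[n]\ast s[n]$ is fixed by a finite list of strict inequalities among the $\tilde y_i[n]$, it is locally constant in the channel coefficients, so for $\tilde\h_i$ close enough to $\h_i$ the sequence $\tilde y_i[\cdot]$ has the same amplitude order as $y_i[\cdot]$ (the samples forced to a common value by the finite support of $s$ stay at that common value, automatically). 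Because $L>1$, such a neighborhood contains channels $\tilde\h_i$ that do not lie on the ray $\{c\,\h_i:c\in\mathbb R\}$, i.e. genuinely different ones.

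Next, using the amplitude-order property, equal amplitude order of the value sets of $\{y_i[n]\}$ and $\{\tilde y_i[n]\}$ yields a well-defined increasing bijection $\theta_i$ between them with $\tilde y_i[n]=\theta_i(y_i[n])$, which extends to a monotonic invertible function on $\mathbb R$ (e.g. by piecewise-linear interpolation and affine extrapolation). I would then define $\tilde f_i:=f_i\circ\theta_i^{-1}$, again monotonic and invertible, so that
\[
\tilde f_i\big(\tilde y_i[n]\big)=f_i\big(\theta_i^{-1}(\tilde y_i[n])\big)=f_i\big(y_i[n]\big)=x_i[n].
\]
Hence the SIMO Wiener system with channels $\tilde\h_i$, nonlinearities $\tilde f_i$ and the same input $s[n]$ reproduces all observed outputs, and letting $\tilde\h_i$ vary over the neighborhood gives the claimed continuum of distinct realizations.

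The part I expect to require the most care is making ``genuinely different'' precise and separating it from the scalar ambiguities the paper already concedes: one must check that the perturbed channels can be chosen off the orbit $\{c\,\h_i\}$ --- which is exactly where the hypothesis $L>1$ enters --- and that the perturbation neither creates nor destroys ties among the $y_i[n]$, which I would sidestep by restricting to the generic configuration. It is also worth spelling out \emph{why} finiteness is indispensable: for an input of infinite length, a change of channel direction would generically reorder some pair among the infinitely many internal-signal samples, so the amplitude orders could no longer be matched and the whole construction would collapse.
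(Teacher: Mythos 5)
Your proof is correct and follows essentially the same route as the paper's: perturb the system slightly so that the amplitude order of each internal signal $y_i[n]$ is preserved, and absorb the resulting monotonic reparametrization $\theta_i$ into a redefined nonlinearity $f_i\circ\theta_i^{-1}$. The only difference is that the paper also perturbs the input signal $s[n]$ (which for a SIMO system must be done consistently across branches), whereas you keep $s[n]$ fixed and perturb only the channels --- a special case of the same construction that is, if anything, cleaner --- and your explicit handling of where $L>1$ enters, the no-ties genericity assumption, and the extension of $\theta_i$ to all of $\mathbb{R}$ fills in details the paper leaves implicit.
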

\begin{proof}
We first show that a SISO Wiener system is not identifiable in general for finite $N$. The proof is given by a simple counterexample. Denote by $\h$ the linear channel of a given Wiener system, by $f(\cdot)$ its nonlinearity and by $y[n]$ its intermediate signal. Consider $\delta$ to be the minimal distance between any two consecutive ordered samples $y[n]$. Since the input signal $s[n]$ to this system is of finite length, we can assume that $\delta$ will be small but non-null.

Now consider an alternative Wiener system with input signal $\hat s[n] = s[n] + \epsilon[n]$, linear channel $\hat\h + \vec\nu$ and intermediate signal $\hat y[n]$, where $\epsilon[n]$ represents a perturbation signal and $\vec\nu$ is a channel perturbation that is not a scaling of $\h$. It is clear that by choosing the perturbations small enough w.r.t. $\delta$, but not zero, the amplitude order of each $\hat y[n]$ can be made identical to the one of its corresponding $y[n]$. Therefore, $\hat y[n]$ and $y[n]$ can be transformed one into the other through a function $\hat y[n] =  \theta(y[n])$. By choosing the nonlinearity of the alternative Wiener system to be $f(\theta^{-1}(\cdot))$ a Wiener system is obtained that is different from the given system but whose output sequence is identical. Hence, the given Wiener system is not uniquely identifiable.

The previous counterexample can be applied to each branch of a SIMO Wiener system independently. Therefore, if no additional assumptions are made, a SIMO Wiener system is not uniquely identifiable when its input signal has finite length.
\end{proof}

While Lemma~\ref{lemma:nonid} may seem discouraging, 
it requires to be put in a practical perspective. As the previous example shows, the norm of the allowed perturbations depends on the differences $\delta$ between consecutive ordered samples. If the length $N$ of the input signal grows and the nonlinearities become completely excited in their ranges, it is reasonable to assume that the $\delta$-values will shrink. As a result, the norm of the allowed perturbations will shrink as well, and hence the identification error reduces. In the limit case of $N\rightarrow \infty$ the system becomes completely identifiable. Note that the reason the system is not identifiable in theory for finite $N$ is the unrestricted flexibility of the nonlinearities, represented by $\theta(\cdot)$ in the example. If this flexibility is somehow limited, though, identifiability becomes possible. In a practical scenario this is generally true, 
as can be motivated by the principle of parsimony. Therefore, as long as the nonlinearities are sufficiently smooth, it is possible to identify a SIMO Wiener system using only a finite number of samples. We will model the nonlinearities as non-parametric \emph{kernel expansions} (see Section~\ref{sec:km}), which allow to impose different degrees of smoothness on the nonlinearities without limiting their shape to any particular model.

Based on the previous discussion we can formulate a set of assumptions, in addition to A1 and A2,  that guarantee identifiability in most practical situations:
\begin{itemize}
\item[A3.] $L>1$;
\item[A4.] The nonlinearities are invertible and monotonic;
\item[A5.] $N\gg1$ and each $f_i(\cdot)$ is sufficiently excited in its range.
\end{itemize}
Appropriate values of $N$ depend on each scenario individually. Specifically, the smoother the nonlinearities of the system, the lower $N$ can be. As we will see in Section~\ref{sec:experiments}, relatively small sample sizes are sufficient in practice.

\begin{figure*}[tp]
\centering %
\includegraphics{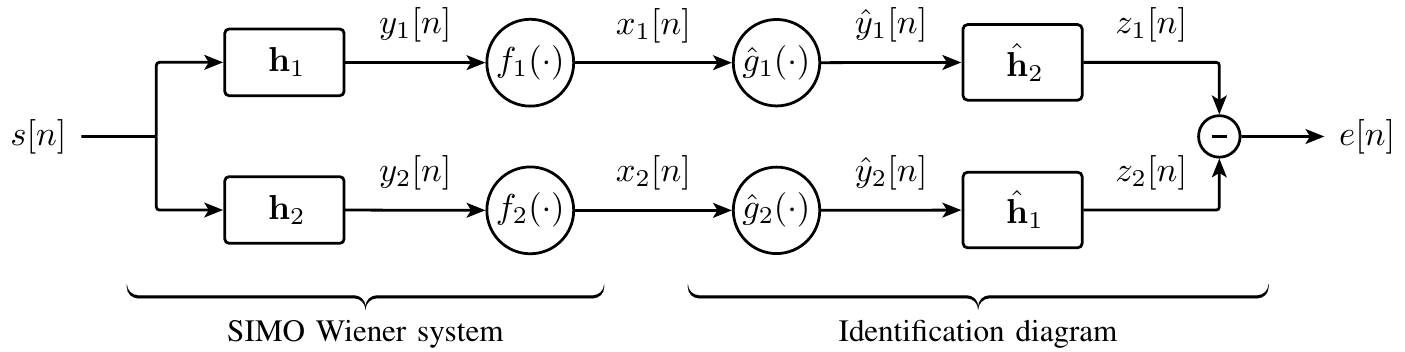}
%
%
%
\caption{A SIMO system consisting of two Wiener subsystems, followed by the proposed identification diagram in the form of a MISO Hammerstein system.} \label{fig:blindideqwiener_system_simo2_eq}
\end{figure*}


\subsection{Outline of the proposed method} \label{sec:outline}
We now describe the proposed blind identification method, starting with the two-channel Wiener system.
The proposed identification diagram, which has the structure of a multiple-input single-output (MISO) Hammerstein system, is pictured in Fig.~\ref{fig:blindideqwiener_system_simo2_eq}.
In particular, since the nonlinearities $f_i(\cdot)$ of the SIMO Wiener system are assumed to be invertible, they can be canceled out by applying the inverse nonlinearities $g_i(\cdot) = f_i^{-1}(\cdot)$ to the system outputs $x_i[n]$. 
If the nonlinearities were known, the problem would reduce to identifying the linear channels $\h_1$ and $\h_2$, which is achieved by applying either one of the discussed linear techniques. However, since the nonlinearities $f_i(\cdot)$ are also unknown, they need to be estimated jointly with the linear part.

Similarly to the linear scenario, we define the cost function
\begin{equation}
J = \frac{1}{2} \sum_{n=1}^N |z_1[n] - z_2[n]|^2 = \frac{1}{2} \sum_{n=1}^N |e[n]|^2, \label{eq:cost_basic}
\end{equation}
which uses the identification diagram outputs, defined as
\begin{equation}
z_1[n] = \sum_{l=0}^{L-1} \hat h_2[l] \hat g_1\left(x_1[n-l]\right), \label{eq:output_general}
\end{equation}
and equivalent for $z_2[n]$. The minimization of Eq.~\eqref{eq:cost_basic} represents a nonlinear optimization problem, which is generally hard to solve. In order to avoid trivial solutions such as the zero-solution and overfit solutions caused by an excessive flexibility of the nonlinearities, the solutions will require to be restricted in several ways. We will resort to the framework of kernel methods to implement these restrictions and to linearize the problem.

\subsection{Kernel methods} \label{sec:km}

Kernel methods are powerful nonlinear techniques based on a nonlinear transformation of the data $\x$ into a high-dimensional reproducing kernel Hilbert space (RKHS), in which it is more likely that the transformed data $\Phi(\x)$ is linearly separable. In this feature space, inner products can be calculated by using a positive definite kernel function satisfying Mercer's condition
\cite[Chapter 5]{Vapnik95nature}: $\kappa(\x, \x') = \langle \Phi(\x), \Phi(\x') \rangle$. This simple and elegant idea, also known as the ``kernel trick'', allows us to perform inner-product based algorithms implicitly in feature space by replacing all inner products by kernels. The solution of the resulting linear problem in feature space then corresponds to the solution of the nonlinear problem in the original space. Common kernel-based algorithms include support vector machines (SVM) \cite[Chapter 5]{Vapnik95nature} and kernel principal component analysis (KPCA) \cite{scholkopf98_kev}.

Thanks to the Representer theorem \cite{scholkopf01_representer}, a large class of optimization problems in RKHS have solutions that can be expressed as kernel expansions in terms of the available data. Specifically, it allows us to model a nonlinearity $g(\cdot)$ as
\begin{equation}
y = g(\x) = \sum_{n=1}^N \alpha[n] \kappa(\x, \x[n]). \label{eq:kernel_exp_g}
\end{equation}
where $\{\x[n]|n=1,\dots,N\}$ are the training data. It has been shown that this expansion acts as a universal approximator \cite{micchelli2006universal} for sufficiently rich kernels such as the Gaussian kernel,
\begin{equation*}
\kappa(\x,\x') = \exp(-\|\x-\x'\|^2/2\sigma^2). \label{eq:gaussian}
\end{equation*}
For a given set of $N$ input-output data pairs $(\x[n],y[n])$, Eq.~\eqref{eq:kernel_exp_g} can be written in matrix form as
\begin{equation}
\y = \K \vec\alpha, \label{eq:kernel_exp_matrix}
\end{equation}
where $\y = [y[1],\dots,y[N]]^T$, $\vec\alpha = [\alpha[1],\dots,\alpha[N]]^T$, and $\K \in \mathbb{R}^{N\times N}$ is the kernel matrix with elements
\begin{equation}
K[i,j] = \kappa(\x[i],\x[j]). \label{eq:kernelmatrix}
\end{equation}
As we will see in the sequel, smoothness constraints can be imposed on the represented nonlinearity by restricting the norm of $\vec\alpha$. First, though, we outline the proposed optimization problem using kernel expansions to represent the estimated nonlinearities.

\subsection{Proposed optimization problem}

Consider the output $z_1[n]$ of the first branch of the proposed identification scheme of Fig.~\ref{fig:blindideqwiener_system_simo2_eq}. By introducing the kernel expansion \eqref{eq:kernel_exp_g} into Eq.~\eqref{eq:output_general}, it can be written as
\begin{equation}
z_1[n] = \sum_{l=0}^{L-1} \sum_{m=1}^{N} \hat h_2[l] K_1[n-l,m] \hat\alpha_1[m], \label{eq:blindideqoutput}
\end{equation}
where $\K_1$ is the kernel matrix of the available data $x_1[n]$ of this branch.
The entire output vector $\z_1 = [z_1[1], z_1[2], \dots, z_1[N]]^T$ can thus be written as
\begin{equation*}
\z_1 = \bar\K_1 \br_2,
\end{equation*}
in which the elements of $\bar\K_1 \in \mathbb{R}^{N\times(LM)}$ are defined as
\begin{equation*}
\bar K_1[n,lM+m] = K_1[n-l,m],
\end{equation*}
and $\br_2$ represents the Kronecker product of $\hat \h_2 = [\hat h_2[0],\dots,\hat h_2[L-1]]^T$ and $\hat{\vec\alpha}_1 = [\hat\alpha_1[1],\dots,\hat\alpha_1[N]]^T$,
\begin{equation*}
\br_2 = \hat\h_2 \otimes \hat{\vec\alpha}_1.
\end{equation*}
After obtaining a similar expression for the second output channel, i.e. $\z_2 = \bar\K_2\br_{1}$, the linear optimization problem \eqref{eq:cca1} is extended for SIMO Wiener systems as
\begin{equation}
\begin{aligned}
& \underset{\br_{1},\br_{2},\hat\h_1,\hat\h_2,\hat{\vec\alpha}_1,\hat{\vec\alpha}_2}{\text{minimize}} & & \|\bar\K_1 \br_{2} - \bar\K_2\br_{1}\|^2 \\
& \text{subject to} & & \|\bar\K_1 \br_{2}\|^2 = \|\bar\K_2 \br_{1}\|^2 = 1\\
& & & \br_{2} = \hat\h_2 \otimes \hat{\vec\alpha}_1\\
& & & \br_{1} = \hat\h_1 \otimes \hat{\vec\alpha}_2.
\end{aligned}
\label{eq:optim2}
\end{equation}
For simplicity, we denote this problem as
\begin{equation}
\begin{aligned}
& \underset{\hat\h_1,\hat\h_2,\hat{\vec\alpha}_1,\hat{\vec\alpha}_2}{\text{minimize}} & & \left\| \z_1 - \z_2 \right\|^2 \\
& \text{subject to} & & \|\z_1\|^2 = \|\z_2\|^2 = 1,
\end{aligned} \label{eq:optim2_simple}
\end{equation}
where we have omitted the trivial dependency of $\z_1$ and $\z_2$ on $\hat\h_1$, $\hat\h_2$, $\hat{\vec\alpha}_1$ and $\hat{\vec\alpha}_1$, see Eq.~\eqref{eq:blindideqoutput}.


\subsection{Alternating optimization procedure}

The optimization problem \eqref{eq:optim2_simple} is not convex and generally hard to solve. However, if $\hat{\vec\alpha}_1$ and $\hat{\vec\alpha}_2$ were available, this problem would reduce to the easier problem \eqref{eq:cca1}. Equivalently, if $\hat \h_1$ and $\hat \h_2$ were known, a similar reduction would lead to another optimization problem of the form of \eqref{eq:cca1} that would yield solutions for $\hat{\vec\alpha}_1$ and $\hat{\vec\alpha}_2$. This suggests an iterative scheme that alternates between updating the estimates of the linear channels $\hat\h_i$ and the nonlinearity estimates $\hat{\vec\alpha}_i$. Convergence is guaranteed because each update may either decrease or maintain the cost \cite{bezdek03alternating,stoica04cyclic}.

\subsubsection{Iteration 1: given $\hat{\vec\alpha}_i$, obtain $ \hat \h_i$}

If estimates of $\hat{\vec\alpha}_1$ and $\hat{\vec\alpha}_2$ are given, the output $z_{1}[n]$ is
\begin{equation}
z_{1}[n] = \sum_{l=0}^{L-1} \hat h_2[l] \hat y_1[n-l], \label{eq:blindideqoutput_alpha_known}
\end{equation}
in which the elements $\hat y_1[n]$ are obtained with Eq.~\eqref{eq:kernel_exp_g}. In matrix form, Eq.~\eqref{eq:blindideqoutput_alpha_known} becomes
\begin{equation*}
\z_{1} = \hat \Y_1 \hat \h_2,
\end{equation*}
where the $n$-th row of the matrix $\hat \Y_1$ contains the elements from $\hat y_1[n]$ until $\hat y_1[n+L-1]$. This allows us to rewrite the minimization problem \eqref{eq:optim2_simple} as
\begin{equation}
\begin{aligned}
& \underset{\hat\h_1, \hat\h_2}{\text{minimize}} & & \| \hat \Y_1 \hat \h_2 - \hat \Y_2 \hat \h_1 \|^2\\
& \text{subject to} & & \| \hat \Y_1 \hat \h_2 \|^2 = \| \hat \Y_2 \hat \h_1 \|^2 = 1. \label{eq:blindideqcca_h}
\end{aligned}
\end{equation}
This problem is equivalent to the CCA problem \eqref{eq:cca1}, whose solution is found by solving the GEV \eqref{eq:blindideqgev_cca} \cite{Hardoon04_CCA}.
\subsubsection{Iteration 2: given $\hat \h_i$, obtain $\hat{\vec\alpha}_i$}

If estimates of $\hat \h_1$ and $\hat \h_2$ are given, we obtain
\begin{equation}
z_{1}[n] = \sum_{m=1}^{N} W_{1}[n,m] \hat\alpha_1[m], \label{eq:blindideqoutput_h_known}
\end{equation}
where the auxiliary variable
\begin{equation}
W_{1}[n,m] = \sum_{l=0}^{L-1} \hat h_2[l] K_1[n-l,m] \label{eq:def_w2}
\end{equation}
is introduced. In matrix form, Eq.~\eqref{eq:blindideqoutput_h_known} can be written as
\begin{equation*}
\z_{1} = \W_{1} \hat{\vec\alpha}_1,
\end{equation*}
with $\W_1 \in \mathbb{R}^{N\times N}$. By doing so, the minimization problem \eqref{eq:optim2_simple} becomes
\begin{equation}
\begin{aligned}
& \underset{\hat{\vec\alpha}_1, \hat{\vec\alpha}_2}{\text{minimize}} & & \| \W_{1} \hat{\vec\alpha}_1 - \W_{2} \hat{\vec\alpha}_2 \|^2 \\
& \text{subject to} & & \| \W_{1} \hat{\vec\alpha}_1 \|^2 = \| \W_{2} \hat{\vec\alpha}_2 \|^2 = 1. \label{eq:blindideqkcca_alpha}
\end{aligned}
\end{equation}
which establishes a kernel CCA problem that accounts for the estimation of the nonlinearities $g_i(\cdot)$. The solution is found by solving the associated GEV, which is similar to \eqref{eq:blindideqgev_cca}.

\subsection{Computational issues and regularization}

We now discuss some of the computational issues that need to be solved to guarantee that the proposed procedure performs correctly and efficiently.

\subsubsection{Low-rank approximations}

Solving a GEV generally requires cubic time and memory complexity in terms of the involved matrix sizes, i.e. $\mathcal{O}(N^3)$. Accordingly, if the training set is large, the GEV for Eq.~\eqref{eq:blindideqkcca_alpha} will pose very large computational requirements. Interestingly however, kernel matrices usually have a quickly decaying spectrum \cite{Williams01_nystrom,smola2000sparse}, which allows to approximate them reliably by a low-rank decomposition of the form
\begin{equation}
\K_i \cong \G_i \G_i^T, \label{eq:decomposition}
\end{equation}
where $\G \in \mathbb{R}^{N\times M}$ and $M\ll N$. Following Eq.~\eqref{eq:kernel_exp_matrix}, we obtain
\begin{equation}
\hat\y_i = \G_i \hat{\vec\alpha}_i, \label{eq:yG}
\end{equation}
where $\hat{\vec\alpha}_i$ now contains a reduced set of $M$ expansion coefficients.

The auxiliary variables defined in Eq.~\eqref{eq:def_w2} are replaced by
\begin{equation}
W_{1}[n,m] = \sum_{l=0}^{L-1} \hat h_2[l] G_1[n-l,m]. \label{eq:def_w2_G}
\end{equation}
The new matrices $\W_i$ have dimensions $N\times M$, which reduces the complexity of the GEV for Eq.~\eqref{eq:blindideqkcca_alpha} to $\mathcal{O}(M^3)$. 
Several methods have been proposed to retrieve suitable kernel matrix decompositions in $\mathcal{O}(NM^2)$ time, most notably Nystr\"om approximation \cite{Williams01_nystrom}, sparse greedy approximations \cite{smola2000sparse}, and incomplete Cholesky decomposition (ICD) \cite{Bach02kernelica}. We will use the latter in the simulations.

\subsubsection{Data centering}

An important requirement of CCA is that the input data be centered. For KCCA, this translates into the need to center the data in feature space \cite{Bach02kernelica}. While it is hard to remove the mean explicitly from the transformed data $\Phi(x[n])$, the approximate kernel matrix $\G \G^T$ can be centered easily in feature space by performing the transformation
\begin{equation}
\G \leftarrow \left(\I - \frac{1}{N}\1 \right) \G, \label{eq:centering}
\end{equation}
where $\I$ is the unit matrix and $\1$ is an $N\times N$ all-ones matrix. This operation simply removes the column means from $\G$, and can thus be implemented without explicitly calculating any $N\times N$ matrices \cite{vanvaerenbergh2010kernel}.

\subsubsection{Regularization}

If any of the matrices $\W_i$ is invertible, the GEV~\eqref{eq:blindideqkcca_alpha} does not yield a useful solution as it allows to find perfect correlation between any two data sets.
This is a standard issue in KCCA that stems from the unbounded flexibility of the nonlinearities, which is a property we seek to avoid (see Section~\ref{sec:identifiability}). A straightforward fix is to regularize the flexibility of the projections $\hat{\vec\alpha}_i$ by penalizing their norms, as follows \cite{Bach02kernelica,Hardoon04_CCA}:
\begin{equation}
\begin{aligned}
& \underset{\hat{\vec\alpha}_1, \hat{\vec\alpha}_2}{\text{minimize}} & & \| \W_{1} \hat{\vec\alpha}_1 - \W_{2} \hat{\vec\alpha}_2 \|^2 \\
& \text{subject to} & & \| \W_{1} \hat{\vec\alpha}_1 \|^2 + c \|\hat{\vec\alpha}_1\|^2 = 1\\
& & & \| \W_{2} \hat{\vec\alpha}_2 \|^2 + c \|\hat{\vec\alpha}_2\|^2 = 1, \label{eq:blindideqkcca_alpha_reg}
\end{aligned}
\end{equation}
where $c$ is a small regularization factor. This yields the following GEV, which combines low-rank approximation and regularization
\begin{multline}
\begin{bmatrix}
\0 & \W_1^T \W_2\\
\W_2^T \W_1 & \0
\end{bmatrix}
\begin{bmatrix}
\hat{\vec\alpha}_1\\
\hat{\vec\alpha}_2
\end{bmatrix}
\\ = \rho
\begin{bmatrix}
\W_1^T \W_1 + c\I & \0\\
\0 & \W_2^T \W_2 +c\I
\end{bmatrix}
\begin{bmatrix}
\hat{\vec\alpha}_1\\
\hat{\vec\alpha}_2
\end{bmatrix}
\label{eq:blindideqgev_cca_reg}
\end{multline}

\subsection{Initialization and algorithm overview} \label{sec:overview}

Analogously to many other iterative techniques, the proposed cyclic minimization algorithm could suffer from local minima. In practice, local minima can be avoided by means of a proper initialization technique. A straightforward initialization consists in estimating the initial nonlinearities as the identity function $g_i(x) = x$, and obtaining the initial estimate of the linear channels $\hat \h_i$ by solving the linear CCA problem \eqref{eq:cca1} for the system outputs $x_i[n]$. 

In case a more accurate initialization is required, the optimization problem \eqref{eq:optim2} can be solved directly with respect to $\br_1$ and $\br_2$, after making the necessary modifications to take into account regularization and low-rank decompositions. The Kronecker structure can be forced a-posteriori onto the estimates of $\br_1$ and $\br_2$ by applying singular value decomposition (SVD) on them (specifically, on the $M\times L$ matrices that are obtained by ordering their elements column-wise).

The entire alternating technique for two output channels is summarized in Alg. \ref{alg:blindideqeq_simow}. We denote this technique as \emph{alternating kernel canonical correlation analysis} (AKCCA).
Assuming $L<M$, the computational complexity of a single iteration of this algorithm is dominated by the KCCA problem. In particular, constructing its matrices and solving the GEV~\eqref{eq:blindideqgev_cca_reg} require $\mathcal{O}(NM)$ and $\mathcal{O}(M^3)$ time, respectively. If more than two output channels are present, the proposed algorithm follows exactly the same course, although the used formulae require to be extended, as shown in the sequel. A Matlab implementation of AKCCA can be obtained at \url{http://gtas.unican.es/people/steven}.

\begin{algorithm}[tbp]
\begin{algorithmic}
\STATE \textbf{input:} Output data sets $x_i[n]$ of the Wiener system.
\STATE Obtain the decomposed kernel matrices $\G_i$, see Eq.~\eqref{eq:decomposition}.
\STATE Center $\G_i$ with Eq.~\eqref{eq:centering}.
\STATE \textbf{initialize:} Set $\hat y_i[n] = x_i[n]$ and construct $\hat \Y_i$.
\REPEAT
\STATE{\textbf{CCA:} With given $\hat \Y_i$, update $\hat \h_i$ by solving Eq.~\eqref{eq:blindideqcca_h}.}%
\STATE{With new $\hat \h_i$, update $\W_i$ as in Eq.~\eqref{eq:def_w2_G}.}%
\STATE{\textbf{KCCA:} With given $\W_i$, update $\hat{\vec\alpha}_i$ by solving Eq.~\eqref{eq:blindideqkcca_alpha_reg}.}%
\STATE With new ${\hat{\vec\alpha}}_i$, update $\hat\y_i$ as in Eq.~\eqref{eq:yG} 
and construct $\hat\Y_i$. 
\UNTIL convergence
\STATE{Apply linear ZF or MMSE equalization on $\hat y_i[n]$ and $\hat \h_i$.}
\STATE \textbf{output:} $s[n]$
\end{algorithmic}
\caption{Alternating KCCA (AKCCA) for Blind Equalization of SIMO Wiener Systems} \label{alg:blindideqeq_simow}
\end{algorithm}

\begin{figure*}[tp]
\centering %
\includegraphics{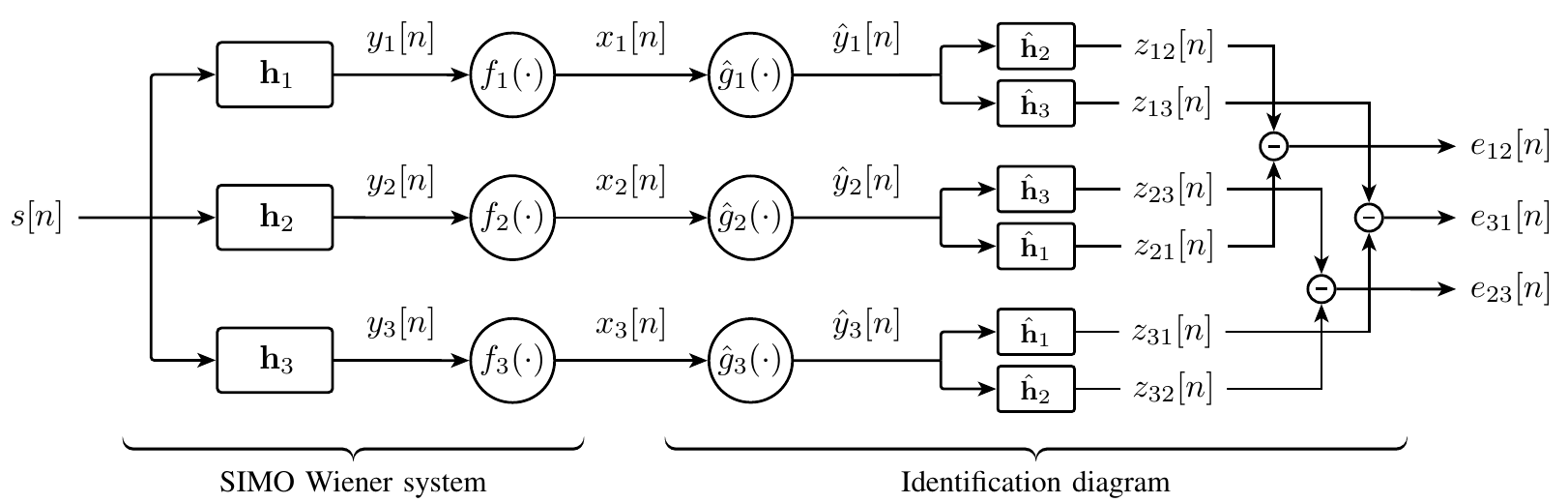}
\caption{A SIMO system consisting of three Wiener subsystems, and the proposed identification diagram.
} \label{fig:blindideqwiener_system_simo3_eq}
\end{figure*}

\section{Extensions} \label{sec:ext}

\subsection{Algorithm for systems with multiple outputs}  \label{sec:multi}

The proposed algorithm for systems with $2$ outputs can be extended to systems with an arbitrary number of outputs, say $P$, in a straightforward fashion. The problem \eqref{eq:optim2_simple} is extended from $2$ to $P$ outputs as
\begin{equation}
\begin{aligned}
& \underset{\hat{\vec\alpha}_1,\dots,\hat{\vec\alpha}_P, \hat\h_1,\dots,\hat\h_P}{\text{minimize}} & & \mathop{\sum_{i,j=1}^P}_{i\neq j}
\|\z_{ij}-\z_{ji}\|^2\\
& \text{subject to} & & \mathop{\sum_{i,j=1}^P}_{i\neq j} \|\z_{ij}\|^2 = 1,
\end{aligned}
\label{eq:optimP_simple}
\end{equation}
where $\z_{ij} = [z_{ij}[1], \dots, z_{ij}[N]]^T$ contains the signal $z_{ij}[n]$ obtained by transforming the output signal $x_i[n]$ by $\hat g_i(\cdot)$ and filtering it by
$\hat\h_j$. The identification diagram of Fig.~\ref{fig:blindideqwiener_system_simo3_eq} illustrates this optimization diagram for the case of $P=3$. While the energy restriction of problem \eqref{eq:optimP_simple} is slightly different compared to the restriction of the problem \eqref{eq:optim2_simple} for $2$ outputs, it was shown in \cite{Via06_effectiveorder}
that they are equivalent for these problems.

The optimization problem \eqref{eq:optimP_simple} can be solved by extending the iterative technique of Alg.~\ref{alg:blindideqeq_simow} to multiple channels. To this end, the problems \eqref{eq:blindideqcca_h} and \eqref{eq:blindideqkcca_alpha_reg} in Alg.~\ref{alg:blindideqeq_simow} require to be replaced by their multi-channel equivalents:

\subsubsection{Iteration 1}
Given estimates of ${\hat{\vec\alpha}}_i$, a set of new estimates of $\hat{\h}_i$ is found by solving
\begin{equation}
\begin{aligned}
& \underset{\hat\h_1,\dots,\hat\h_P}{\text{minimize}} & & \mathop{\sum_{i,j=1}^P}_{i\neq j} \| \hat \Y_i \hat \h_j - \hat \Y_j \hat \h_i \|^2\\
& \text{subject to} & & \mathop{\sum_{i,j=1}^P}_{i\neq j} \| \hat \Y_i \hat \h_j \|^2 = 1,
\end{aligned}
\label{eq:blindideqcca_h_P}
\end{equation}
where the elements of the matrices $\hat \Y_i$ are obtained through \eqref{eq:yG}.
The solution of the minimization problem \eqref{eq:blindideqcca_h_P} can be found as the principal eigenvector of the GEV
\begin{equation}
\R_{{\hat{\vec\alpha}}}\hat \h = \rho \D_{{\hat{\vec\alpha}}} \hat \h, \label{eq:blindideqgev_cca_P2}
\end{equation}
in which
\begin{equation}
\R_{{\hat{\vec\alpha}}} = \begin{bmatrix}
\0 & \hat\Y_2^T\hat\Y_1 & \cdots & \hat\Y_P^T\hat\Y_1\\
\hat\Y_1^T\hat\Y_2 & \0 & \cdots & \hat\Y_P^T\hat\Y_2\\
\vdots & \vdots & \ddots & \vdots\\
\hat\Y_1^T\hat\Y_P & \hat\Y_2^T\hat\Y_P & \cdots & \0
\end{bmatrix}, \label{eq:R_alpha_p}
\end{equation}
$\D_{{\hat{\vec\alpha}}}$ is a block-diagonal matrix whose $i$-th block on the diagonal is
$ \sum_{j=1;j\neq i}^P \hat\Y_j^T\hat\Y_j$, i.e.
\begin{equation}
\D_{{\hat{\vec\alpha}}} = \begin{bmatrix}
\sum_{j=2}^P \hat\Y_j^T\hat\Y_j & \cdots & \0\\
\vdots & \ddots & \vdots\\
\0 & \cdots & \sum_{j=1}^{P-1}\hat\Y_j^T\hat\Y_j
\end{bmatrix} \label{eq:d_hat_a}
\end{equation}
and $\hat \h$ contains the different estimated filters $\hat \h = [\hat \h_1^T, \hat \h_2^T, \dots, \hat \h_P^T]^T$.

\subsubsection{Iteration 2}
Subsequently, the estimates of $\hat{\h}_i$ are fixed and new estimates of ${\hat{\vec\alpha}}_i$ are obtained by solving
\begin{equation}
\begin{aligned}
& \underset{\hat{\vec\alpha}_1,\dots,\hat{\vec\alpha}_P}{\text{minimize}} & & \mathop{\sum_{i,j=1}^P}_{i\neq j}
\| \W_{ij} \hat{\vec\alpha}_i - \W_{ji} \hat{\vec\alpha}_j \|^2 \\
& \text{subject to} & & \mathop{\sum_{i,j=1}^P}_{i\neq j}
\| \W_{ij} \hat{\vec\alpha}_i \|^2 + c \sum_i^P \|\hat{\vec\alpha}_i\|^2 = 1
\end{aligned}
\label{eq:blindideqkcca_alpha_P}
\end{equation}
where the auxiliary variables $W_{ij}$ are defined as
\begin{equation*}
W_{ij}[n,m] = \sum_{l=0}^{L-1} \hat h_j[l] G_i[n-l,m].
\end{equation*}
Again, the minimization problem \eqref{eq:blindideqkcca_alpha_P} can be solved by retrieving the principal eigenvector of the corresponding GEV, which is found as
\begin{equation}
\R_{\hat{\h}} \hat{\vec\alpha} = \rho \D_{\hat\h} \hat{\vec\alpha}, \label{eq:blindideqgev_cca_P}
\end{equation}
in which
\begin{equation}
\R_{\hat{\h}} = \begin{bmatrix}
\0 & \W_{12}^T \W_{21} & \cdots & \W_{1P}^T \W_{P1}\\
\W_{21}^T \W_{12} & \0 & \cdots & \W_{2P}^T \W_{P2}\\
\vdots & \vdots & \ddots & \vdots\\
\W_{P1}^T \W_{1P} & \W_{P2}^T \W_{2P} & \cdots & \0 \label{eq:R_h_p}
\end{bmatrix},
\end{equation}
$\D_{\hat{\h}}$ is a regularized block-diagonal matrix whose $i$-th block on the diagonal is
$\sum_{j=1;j\neq i}^P \W_{ij}^T \W_{ij}$, i.e.
\begin{equation}
\D_{\hat{\h}} = \begin{bmatrix}
\sum_{j=2}^P \W_{1j}^T \W_{1j} & \cdots & \0\\
\vdots & \ddots & \vdots\\
\0 & \cdots & \sum_{j=1}^{P-1} \W_{Pj}^T \W_{Pj}
\end{bmatrix}+c\I \label{eq:d_hat_h}
\end{equation}
and $\hat{\vec\alpha} = [\hat{\vec\alpha}_1^T, \hat{\vec\alpha}_2^T, \dots, \hat{\vec\alpha}_P^T]^T$.

\subsection{Algorithm for systems with identical nonlinearities} \label{sec:identical}

In some cases it is known a priori that the $P$ nonlinearities $f_i(\cdot)$ are identical, for instance if the SIMO Wiener system is obtained by oversampling a SISO Wiener system. The validity of the oversampled model follows from the fact that the SISO system's nonlinearity is memoryless, and thus it applies to the signal $y[n]$ on a sample-by-sample basis. Therefore, it does not matter if one oversamples the internal signal $y[n]$ (similar to the linear case of Section \ref{sec:simolinear}) or the output signal $x[n]$.

The knowledge that the nonlinearities are identical can be exploited to obtain a more accurate estimate. Specifically, the data $\y = [\y_1^T, \dots, \y_P^T]^T$ can be estimated jointly as
\begin{equation*}
\hat\y = \G \hat{\vec\alpha},
\end{equation*}
where $\G = [\G_1^T, \dots, \G_P^T]^T$ is obtained by decomposing the kernel matrix of all data $[x_1[1],x_1[2], \dots, x_p[N]]^T$ and the vector $\hat{\vec\alpha} \in \mathbb{R}^{M\times 1}$ contains the expansion coefficients. The matrices $\R_{\hat \h}$ \eqref{eq:R_h_p} and $\D_{\hat\h}$ \eqref{eq:d_hat_h} in the GEV problem \eqref{eq:blindideqgev_cca_P} reduce to the $M\times M$ matrices
\begin{equation*}
\R_{\hat \h} = \mathop{\sum_{i,j=1}^P}_{i\neq j} \W_{i,j}^T\W_{j,i}
\end{equation*}
and
\begin{equation*}
\D_{\hat\h} = \mathop{\sum_{i,j=1}^P}_{i\neq j} \W_{i,j}^T\W_{i,j} + c\I.
\end{equation*}
We denote this extension of the algorithm as AKCCA-I.

\section{Experiments} \label{sec:experiments}

We now demonstrate the performance of the proposed algorithm through a number of computer simulations. Several different SIMO Wiener systems are used throughout these experiments. Their linear channels are taken from Table~\ref{table:impulse}, in which the impulse responses are chosen randomly, $h_i[j] \in \mathcal{N}(0,1)$. Their nonlinearities are chosen from the following monotonic invertible functions:
\begin{enumerate}
\item $f_1(y) = \tanh(0.8y)+0.1y$, a smooth saturation;
\item $f_2(y) = -0.1 \sin(3y) -0.33y $, a ``stairway'' function;
\item $f_3(y) = 1.5 y - 2.5\frac{1-\exp(-y)}{1+\exp(-y)}$, a smooth deadzone.
\end{enumerate}
The inverse functions of these nonlinearities can be observed in Fig.~\ref{fig:Ncomparison} of the first experiment.

\begin{table}
\caption{Impulse responses of the linear channels used in the simulations.}%
\newcommand{\zerowidth}[1]{\hbox to 0pt{\hss#1\hss}}
\setlength{\tabcolsep}{.9em}
\begin{tabular}[C]{ l r  r  r  r  r} \hlx{hv]}
\multicolumn{1}{c}{\zerowidth{$i$}} & %
\multicolumn{1}{c}{\zerowidth{$h_i[0]$}} & %
\multicolumn{1}{c}{\zerowidth{$h_i[1]$}} & %
\multicolumn{1}{c}{\zerowidth{$h_i[2]$}} & %
\multicolumn{1}{c}{\zerowidth{$h_i[3]$}} & %
\multicolumn{1}{c}{\zerowidth{$h_i[4]$}} \\ \hlx{vhv}
$1$ & $0.4115$ & $0.4165$ & $0.2249$ & $-0.0233$ & $-2.1971$\\
$2$ & $-0.5734$ & $0.1021$ & $-0.1259$ & $-0.4176$ & $0.6657$\\
$3$ & $1.4255$ & $0.6457$ & $-0.9509$ & $-0.1657$ & $-0.2512$\\
$4$ & $0.2846$ & $-0.3880$ & $0.5373$ & $0.7983$ & $0.4093$\\
$5$ & $-0.8769$ & $-0.3056$ & $-0.1160$ & $0.8130$ & $-0.8007$\\
\hlx{vhs}
\end{tabular}
\label{table:impulse}
\end{table}

The parameters of the AKCCA algorithm are set as follows: A Gaussian kernel is used with a different kernel width $\sigma_i$ for each channel. The kernel widths are chosen using Silverman's rule \cite[Section 3.4.2]{Silverman1986},
\begin{equation*}
\sigma = AN^{-\frac{1}{5}},
\end{equation*}
in which $N$ is the number of data points and $A = \min(d, (q_3-q_1)/1.34)$ is the minimum of the empirical standard deviation $d$ and the data interquartile range scaled by $1.34$.
The kernel matrix decompositions from Eq.~\eqref{eq:decomposition} are obtained by applying ICD \cite{Bach02kernelica} on the available data $\x_i[n]$. The precision of ICD is chosen as $10^{-8}$, resulting in values of $M$ within the range $10<M<50$ for all experiments. A standard regularization coefficient of $c=10^{-5}$ is fixed. Convergence of the AKCCA algorithm is assumed when the change in cost between two iterations is less than $10^{-10}$.

\subsection{Experiment 1: System identification}
In the first experiment we study the influence of the number of data, $N$, on the identification performance of the proposed algorithm. We also compare some results to a related supervised method.

An i.i.d. Gaussian signal is used as the input to a $1\times 3$ Wiener SIMO system, with linear channels $\h_1$, $\h_2$ and $\h_3$ (from Table~\ref{table:impulse}) and nonlinearities $f_1$, $f_2$ and $f_3$. No noise is assumed in first test. We perform system identification with AKCCA for input signals of three different sizes, $N=16$, $N=64$ and $N=256$.

The results of AKCCA are shown in Fig.~\ref{fig:Ncomparison}. Each column of plots shows the three estimated inverse nonlinearities $\hat g_i$ corresponding to one value for $N$, and the last column shows the estimated impulse responses of the linear channels for $N=256$. In order to account for the unknown scaling factor that is inherent to Wiener system identification, all estimates were scaled to obtain the same norm as their true values. While perfect system identification is only possible for source signals of infinite length, it is clear that by forcing smoothness onto the solution through a small amount of regularization the number of data to reach an acceptable solution is fairly low: Reasonable estimates are obtained for $N \geq 64$ in this experiment.

\begin{figure*}[t]
\centering
\begin{tabular}{cccc}
\includegraphics[width=0.2\linewidth]{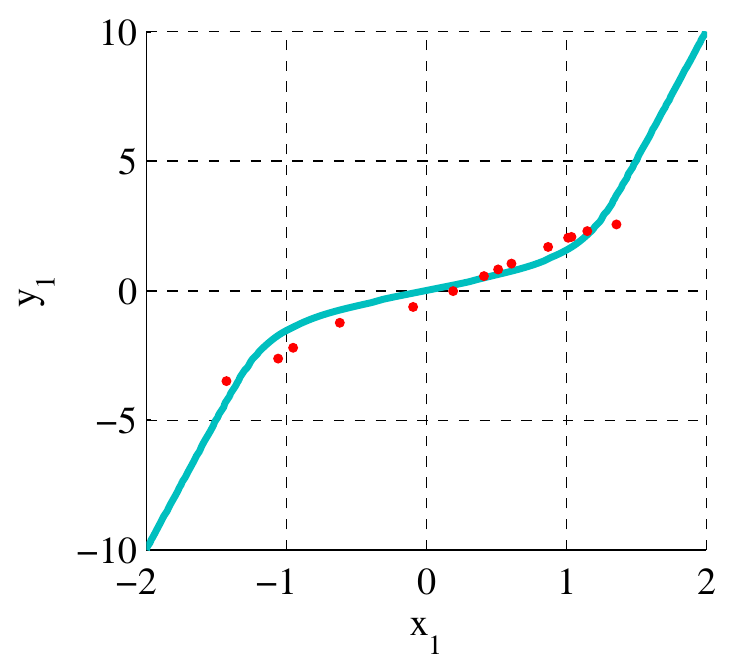} &
\includegraphics[width=0.2\linewidth]{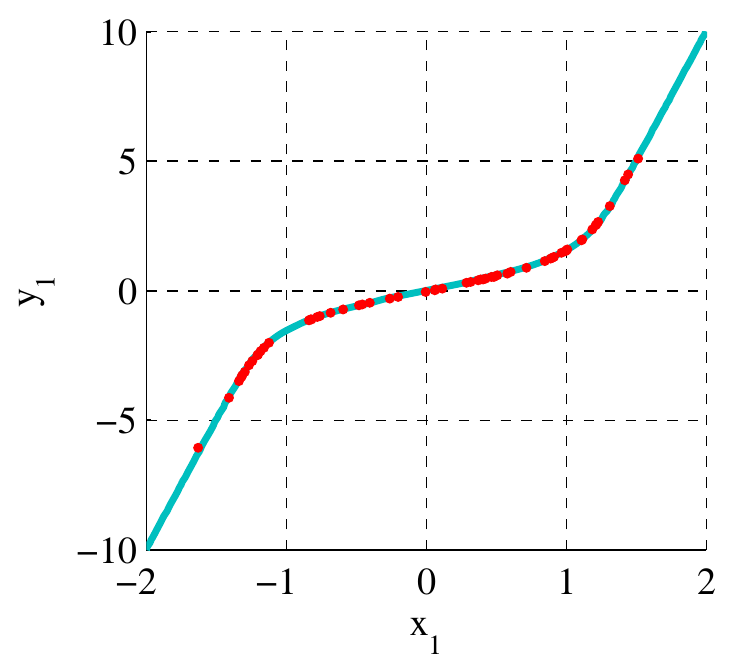} &
\includegraphics[width=0.2\linewidth]{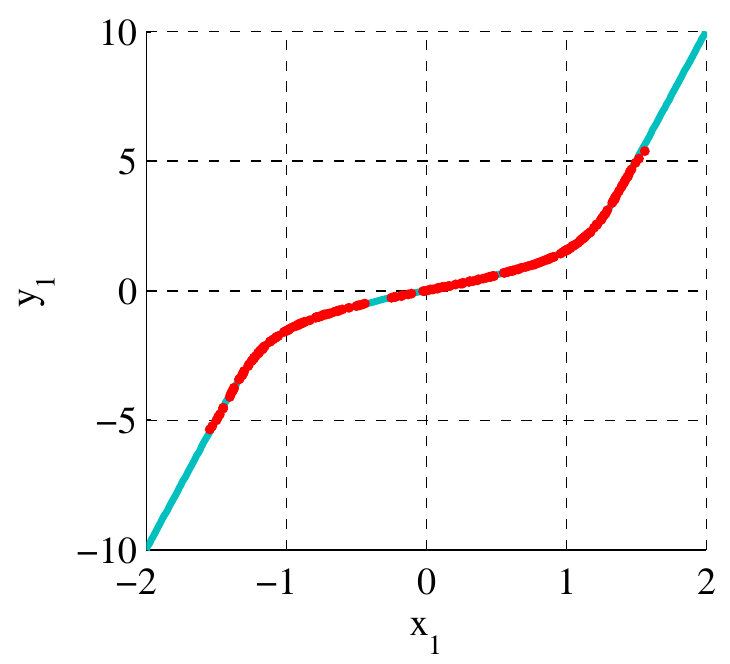} &
\includegraphics[width=0.2\linewidth]{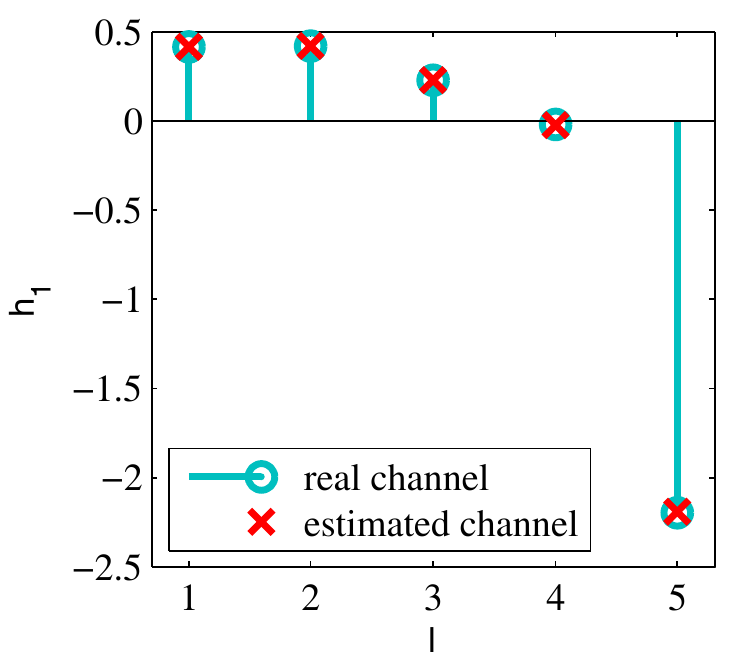} \\
\includegraphics[width=0.2\linewidth]{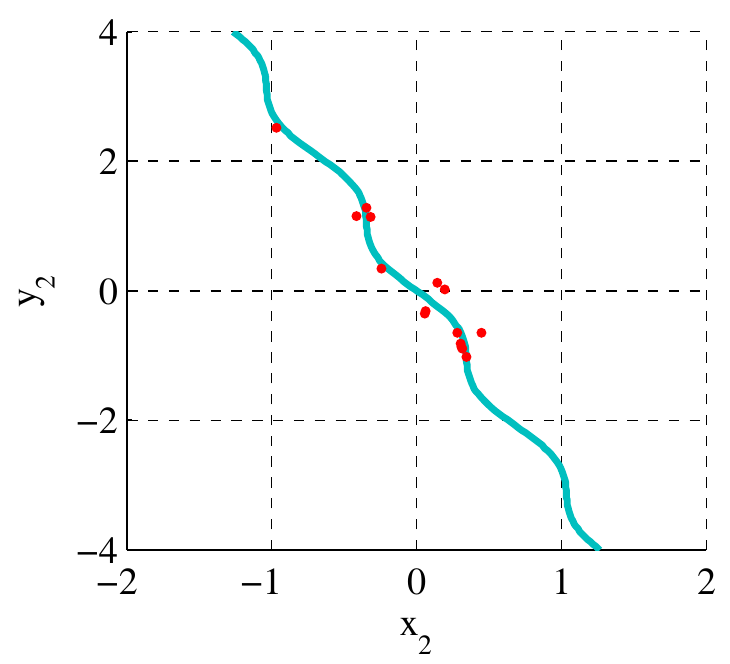} &
\includegraphics[width=0.2\linewidth]{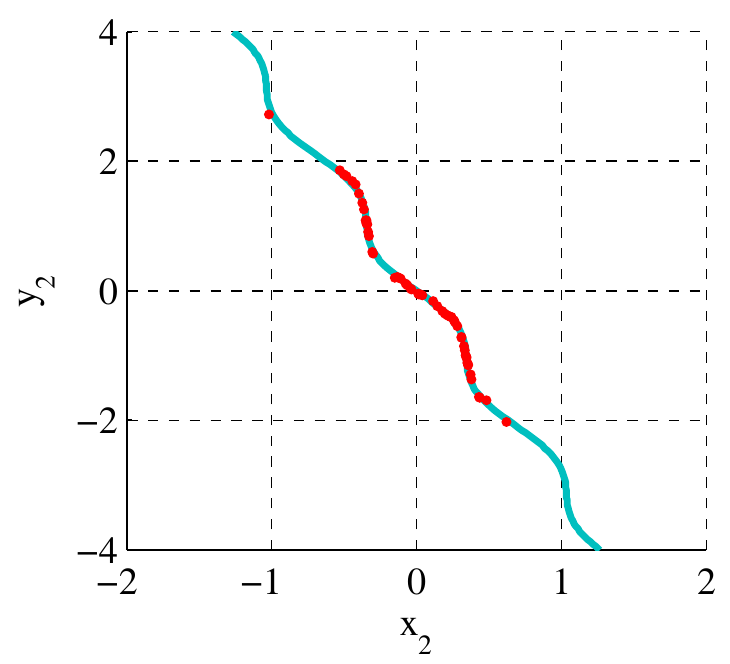} &
\includegraphics[width=0.2\linewidth]{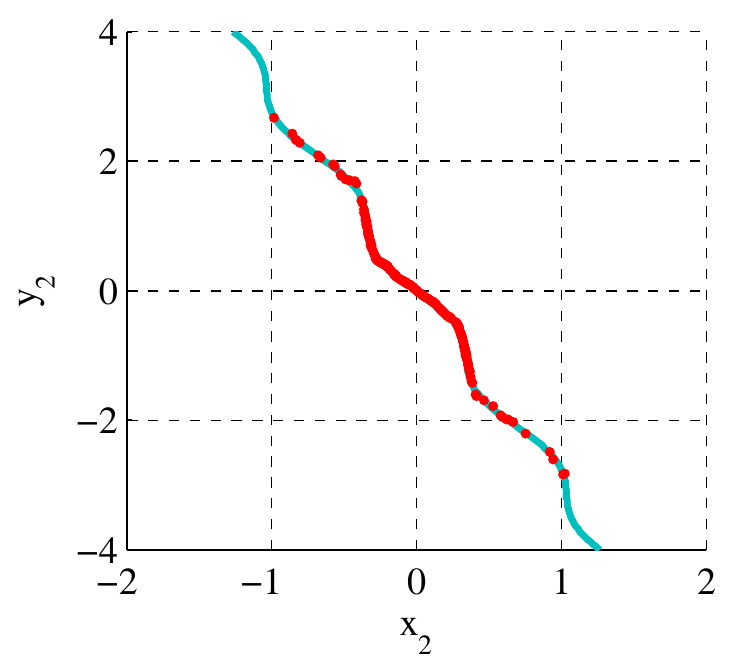} &
\includegraphics[width=0.2\linewidth]{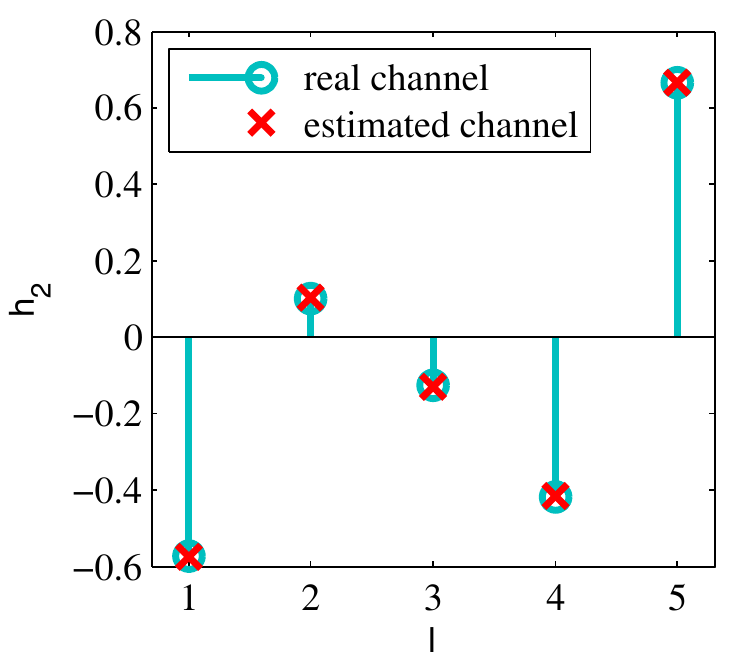} \\
\includegraphics[width=0.2\linewidth]{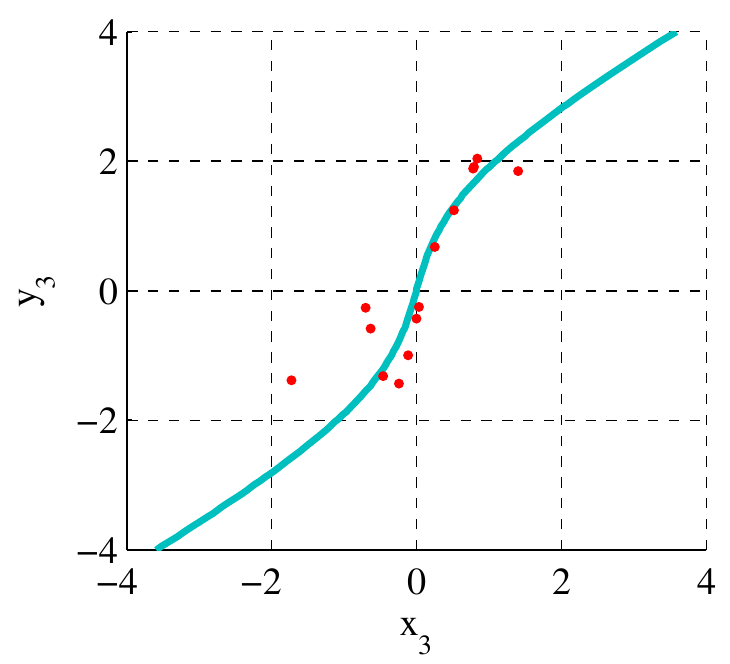} &
\includegraphics[width=0.2\linewidth]{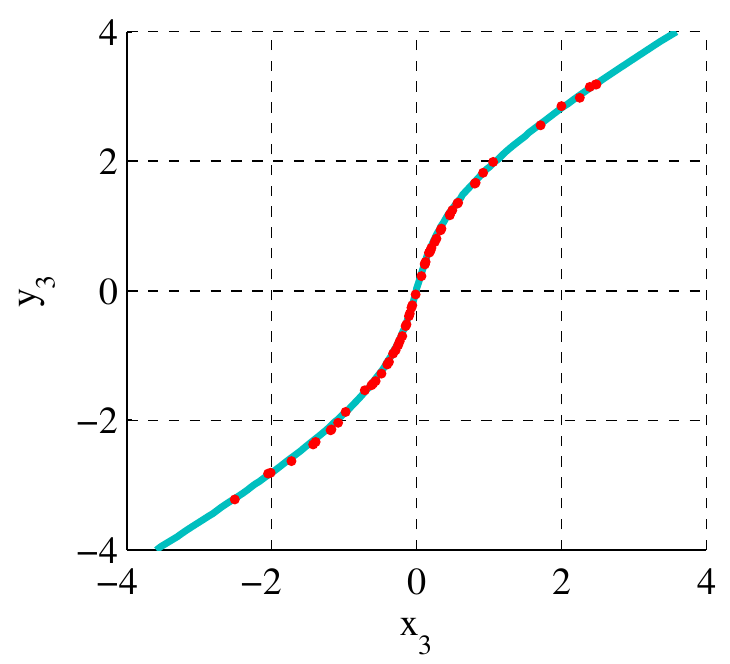} &
\includegraphics[width=0.2\linewidth]{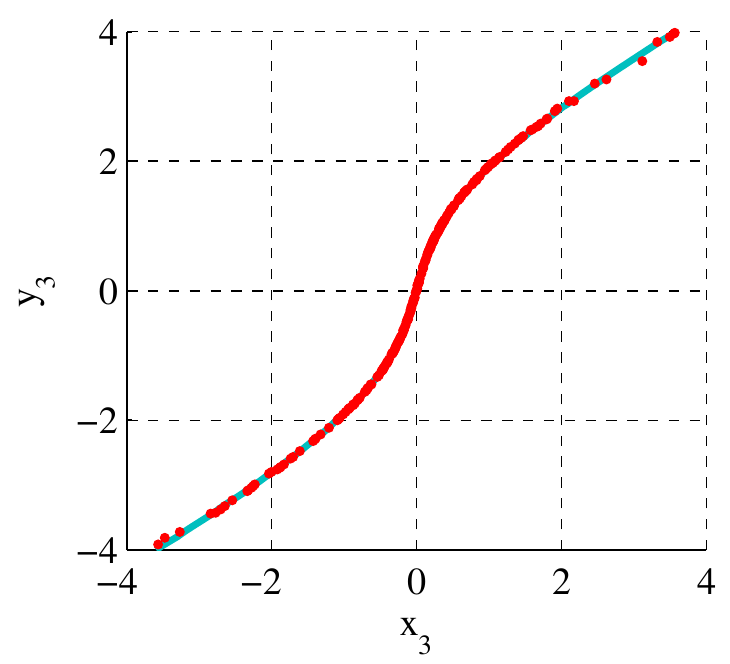} &
\includegraphics[width=0.2\linewidth]{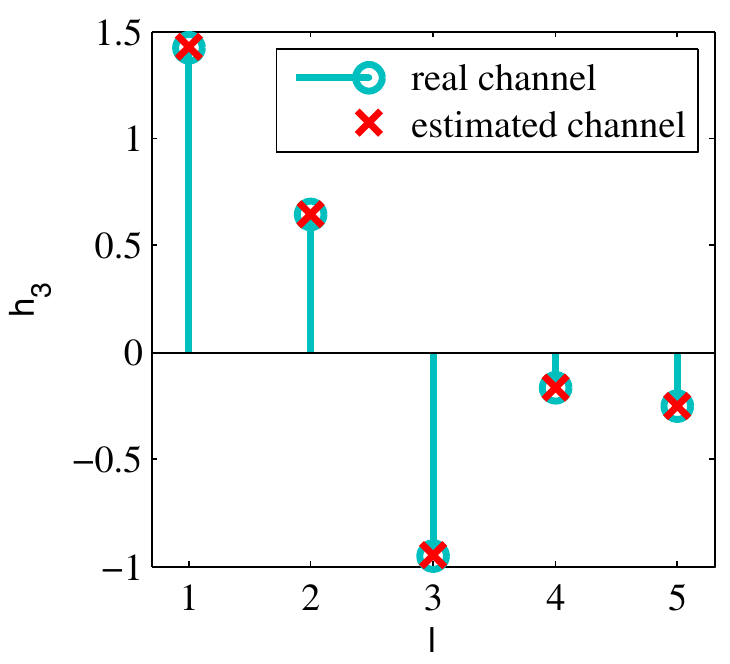} \\
$N=16$ & $N=64$ & $N=256$ & $N=256$ \\
\end{tabular}
\caption{Estimated inverse nonlinearities and linear channels for a $1\times 3$ Wiener system with different numbers of data $N$. Each row shows the estimates for one branch of the system; the solid line in the first three plots represents the true nonlinearity, and the dots indicate the true $x$ versus the estimated $y$-values.}
\label{fig:Ncomparison}
\end{figure*}

\begin{table}
\caption{Estimated impulse responses in experiment 1 for $N=256$. Top: AKCCA (blind). Bottom: supervised KCCA from \cite{vanvaerenbergh08_kcca_wiener_id}.}%
\newcommand{\zerowidth}[1]{\hbox to 0pt{\hss#1\hss}}
\setlength{\tabcolsep}{.9em}
\begin{tabular}[C]{l r r r r r} \hlx{hv]}
\multicolumn{1}{c}{\zerowidth{$i$}} & %
\multicolumn{1}{c}{\zerowidth{$h_i[0]$}} & %
\multicolumn{1}{c}{\zerowidth{$h_i[1]$}} & %
\multicolumn{1}{c}{\zerowidth{$h_i[2]$}} & %
\multicolumn{1}{c}{\zerowidth{$h_i[3]$}} & %
\multicolumn{1}{c}{\zerowidth{$h_i[4]$}} \\ \hlx{vhv}
$1$ & $0.4145$ & $0.4171$ & $0.2285$ & $-0.0235$ & $-2.1961$\\
$2$ & $-0.5740$ & $0.1034$ & $-0.1303$ & $-0.4153$ & $0.6655$\\
$3$ & $1.4271$ & $0.6466$ & $-0.9483$ & $-0.1655$ & $-0.2502$\\
\hlx{vhv}
$1$ & $0.4115$ & $0.4165$ & $0.2249$ & $-0.0232$ & $-2.1971$\\
$2$ & $-0.5742$ & $0.1019$ & $-0.1271$ & $-0.4151$ & $0.6663$\\
$3$ & $1.4256$ & $0.6458$ & $-0.9508$ & $-0.1657$ & $-0.2512$\\
\hlx{vhs}
\end{tabular}
\label{table:impulse_est}
\end{table}

Table~\ref{table:impulse_est} displays the impulse responses for $N=256$ as estimated by AKCCA. As a benchmark, we include the estimates obtained by the supervised KCCA-based identification algorithm method from \cite{vanvaerenbergh08_kcca_wiener_id} in the lower part of this table. This algorithm is performed in batch mode, on each subchannel individually. As can be observed, the performance of the proposed blind algorithm is fairly close to the performance of this related supervised technique. %

In order to study the convergence of the proposed AKCCA algorithm we plot its equalization MSE versus the number of iterations in Fig.~\ref{fig:convergence}. Equalization is carried out here by performing zero-forcing on the system identification result previously obtained. In addition to the noiseless scenario we also include results for the case of $20$ dB SNR. As can be observed, the algorithm typically converges in few iterations. For the noiseless case, convergence times on a $3$GHz $64$-bit Intel Core 2 PC with $4$ GB RAM running Matlab R2009b totaled respectively $0.34$s, $0.46$s, $0.72$s, $1.67$s and $3.05$s, for N ranging from $64$ to $1024$, as in Fig.~\ref{fig:convergence}.


\begin{figure}[t]
\begin{center}
\begin{tabular}{cc}
\includegraphics[width=10pc]{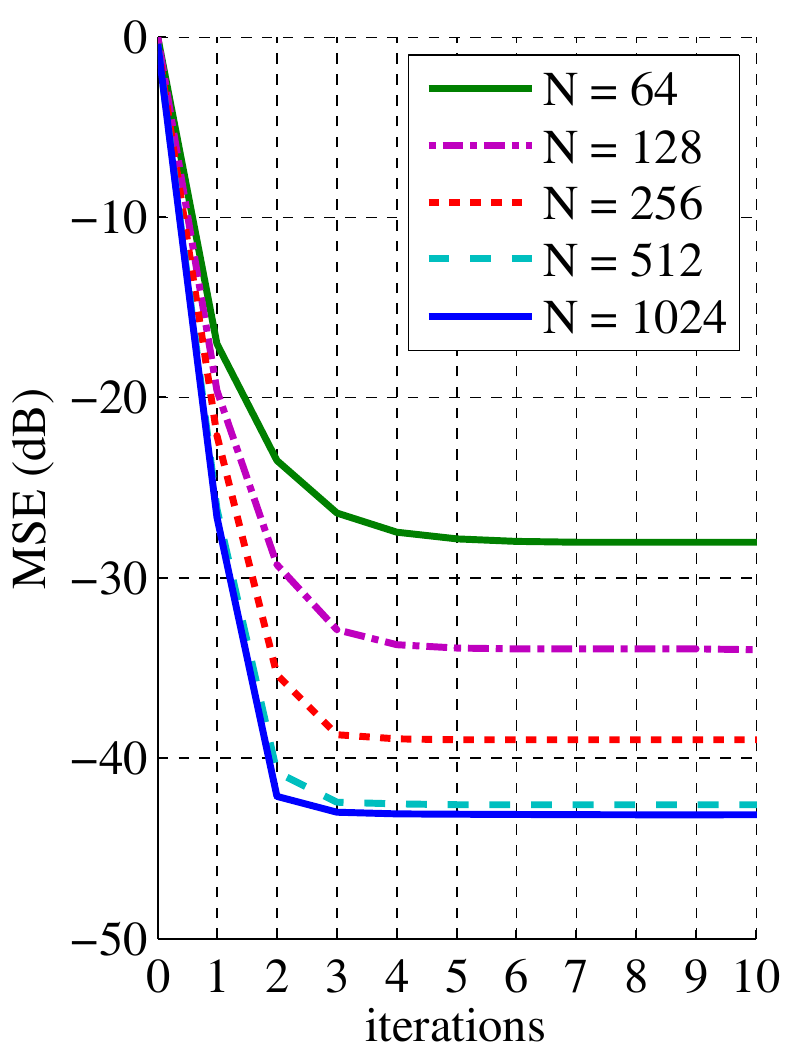} &
\includegraphics[width=10pc]{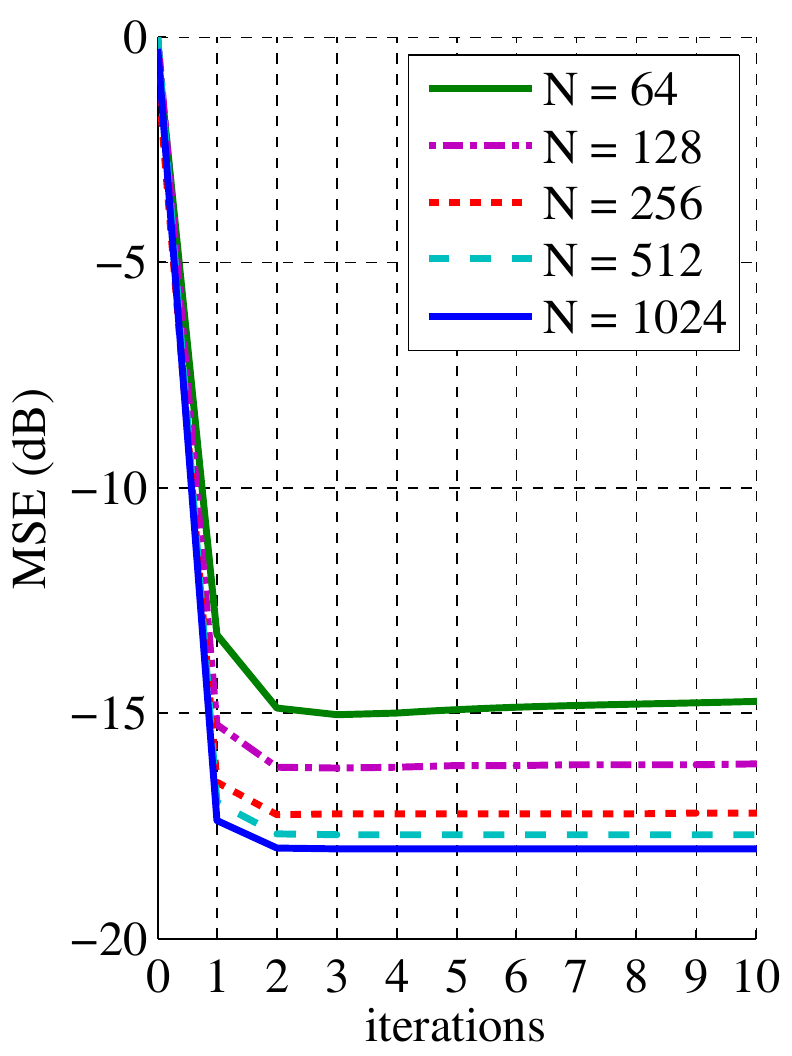}\\
(a) no noise & (b) $20$ dB SNR
\end{tabular}
\end{center}
\caption{Convergence of equalization MSE of AKCCA in experiment 1.} \label{fig:convergence}
\end{figure}

\subsection{Experiment 2: Comparison of equalization performance}
In the second experiment we examine the performance of the proposed algorithm on the problem of blind equalization of SIMO Wiener systems. The length of the source signal is fixed as $N=256$. The chosen SIMO Wiener system consists of the channels $\h_1$, $\h_2$ and $\h_3$ and an identical nonlinearity, $f_1$, in each branch. 
Since all SIMO branches share the same nonlinearity, we can compare 
the performance of the standard AKCCA algorithm to the performance of AKCCA-I, which exploits this property.
Different amounts of additive Gaussian white noise are added to the output of the system. In order to perform equalization, zero-forcing is performed after convergence is reached (see Algorithm \ref{alg:blindideqeq_simow}). We compare the equalization performance of the following algorithms:
\begin{enumerate}
\item {\bf CCA on linear SIMO system}: As a benchmark, we apply the blind linear CCA-based equalizer with zero-forcing from \cite{Via06_effectiveorder} on a system that only contains the linear channels $\h_1$, $\h_2$ and $\h_3$.
\item {\bf CCA on SIMO Wiener system}: The same blind linear method is applied to the chosen SIMO Wiener system.
\item {\bf AKCCA on SIMO Wiener system}: The proposed algorithm, applied to the SIMO Wiener system.
\item {\bf AKCCA-I on SIMO Wiener system}: The extension of the proposed algorithm that takes into account that the nonlinearities are identical (see Section \ref{sec:identical}).
\end{enumerate}
Fig.~\ref{fig:blindideqcca_comparison} shows the equalization MSE, calculated between the true input signal and the estimated input signal. Averages are taken over $100$ independent Monte-Carlo simulations. The results indicate that the proposed algorithms AKCCA and AKCCA-I show good overall performance, and AKCCA-I obtains an advantage over AKCCA at high SNR values starting at $40$ dB.

\begin{figure}[t]
\begin{center}
\includegraphics[width=21pc]{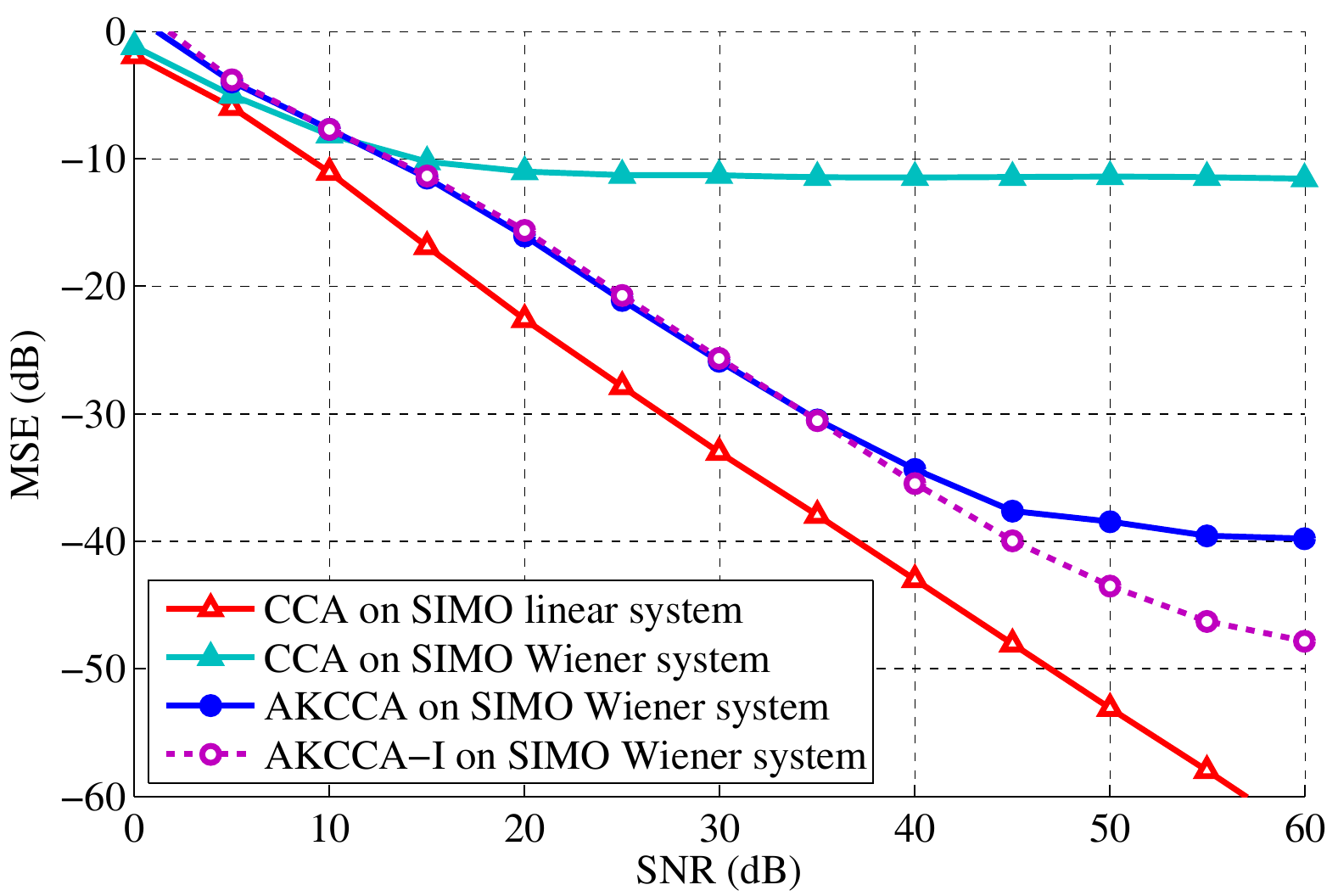}
\end{center}
\caption{Equalization MSE of CCA and KCCA-based algorithms on linear SIMO systems and SIMO Wiener systems.} \label{fig:blindideqcca_comparison}
\end{figure}

\subsection{Experiment 3: Influence of input characteristics and number of output channels}

In the last experiment we study the performance of the proposed AKCCA algorithm for different system configurations and input signal characteristics.

We perform three tests with different input signal characteristics. In each test we compare the performance on three systems with different numbers of output channels. These systems have $2$, $3$ and $4$ output channels, respectively, whose linear channels are taken in order from Table~\ref{table:impulse}, and whose nonlinearities are all chosen as $f_1$. The length of the source signal is fixed as $N=256$.

The availability of extra channels allows to exploit additional spatial diversity, and thus it is expected that a better result will be obtained. Nevertheless, when additional channels are available, the number of parameters to estimate is also higher, which raises the computational cost and could affect the results if only few data were available.

\subsubsection{Gaussian i.i.d. input signal}
First, we perform blind equalization with AKCCA on the three systems using a Gaussian i.i.d. input signal as in the previous experiments, i.e. $s[n] \in \mathcal{N}(0,1)$. The results for the final MSE after equalization are shown in Fig.~\ref{fig:example3_mse}. As can be observed, performance improves when channels are added, although the improvement per extra channel is smaller as more channels are added. In Table \ref{table:times} the average execution times are displayed, for the three systems and for different amounts of SNR. As expected, the algorithm requires more iterations to converge in noisy scenarios. Note also that the computational complexity scales cubically with the amount of subchannels of the system, see Section \ref{sec:overview}.

\begin{figure}[t]
\begin{center}
\includegraphics[width=21pc]{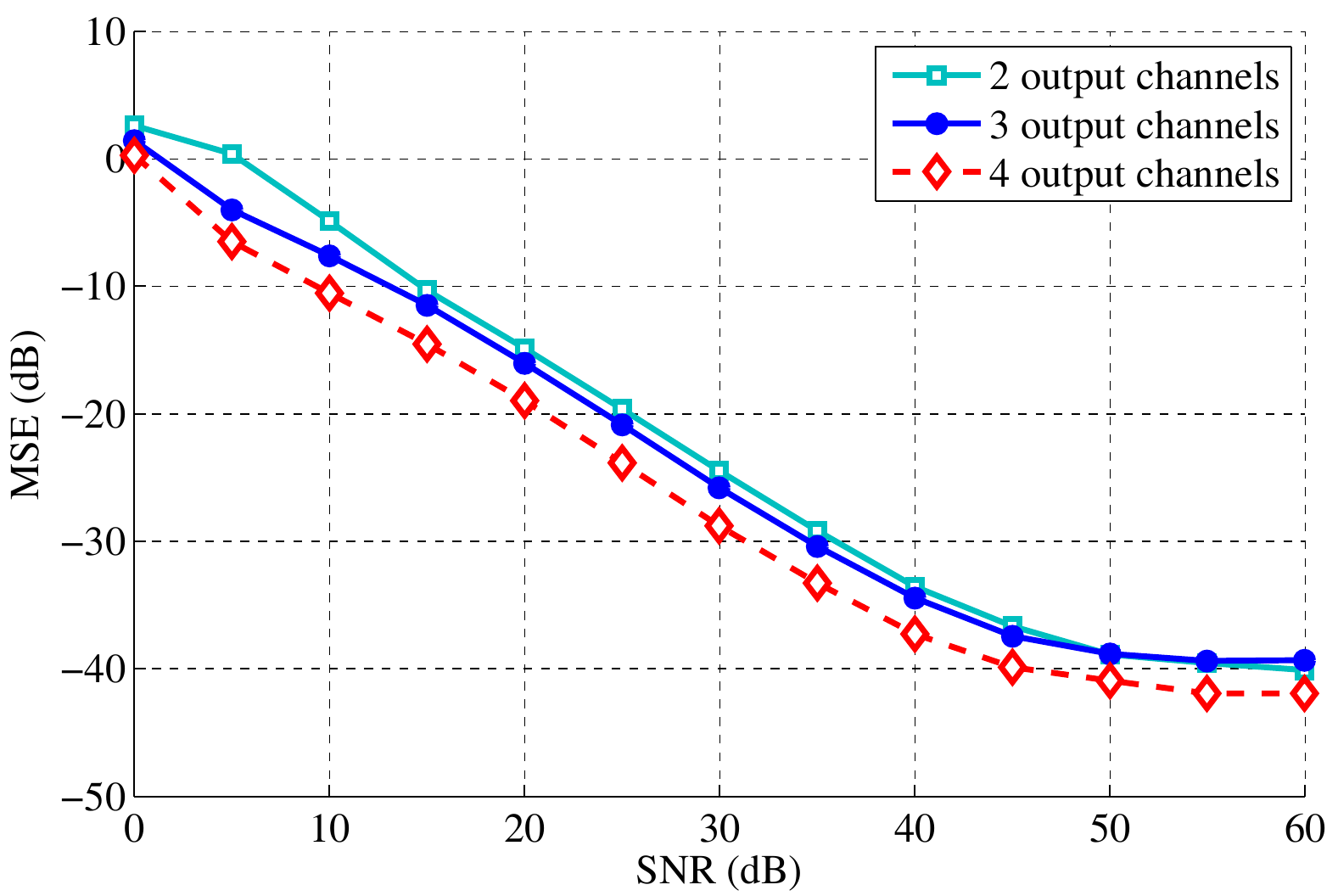}
\end{center}
\caption{Equalization MSE of AKCCA with a Gaussian i.i.d. input signal.} \label{fig:example3_mse}
\end{figure}

\begin{table}
\caption{Average execution times on a $3$GHz $64$-bit Intel Core 2 pc with $4$ GB RAM running Matlab R2009b.}%
\newcommand{\zerowidth}[1]{\hbox to 0pt{\hss#1\hss}}
\setlength{\tabcolsep}{.9em}
\begin{tabular}[C]{| r | c | c | c |} \hlx{hv]}
\multicolumn{1}{|c|}{\zerowidth{}} & %
\multicolumn{1}{c|}{$0$ dB SNR} & %
\multicolumn{1}{c|}{$30$ dB SNR} & %
\multicolumn{1}{c|}{$60$ dB SNR} \\ \hlx{vhv}
2 channels & $0.52$s & $0.14$s & $0.11$s\\
3 channels & $1.25$s & $0.40$s & $0.27$s\\
4 channels & $2.89$s & $0.84$s & $0.63$s\\
\hlx{vhs}
\end{tabular}
\label{table:times}
\end{table}

\subsubsection{Colored input signal}
In a second test we study the performance of AKCCA when the input signal is colored. In order to color the source signal $s[n]$ before it enters the SIMO Wiener system, we apply a $20$-tap low-pass FIR filter $\h_c$ onto it, i.e. $s'[n] = h_c * s[n]$, with cut-off frequency at $0.7\pi$ radians per sample and a stopband attenuation of $60$dB.
AKCCA is then performed to retrieve the filtered input signal $s'[n]$. As the linear channels $\h_i$ are now hardly excited in the stopband frequency range, it is much harder or even impossible to identify their complete frequency responses. Nevertheless, it may still be possible to estimate the colored input signal. The equalization MSE of AKCCA is shown in Fig.~\ref{fig:example4_mse}. Interestingly, the results are only slightly affected w.r.t. to the previous test (see Fig.~\ref{fig:example3_mse}), which demonstrates that the proposed method is also suitable for colored source signals, up to some extent.

\begin{figure}[t]
\begin{center}
\includegraphics[width=21pc]{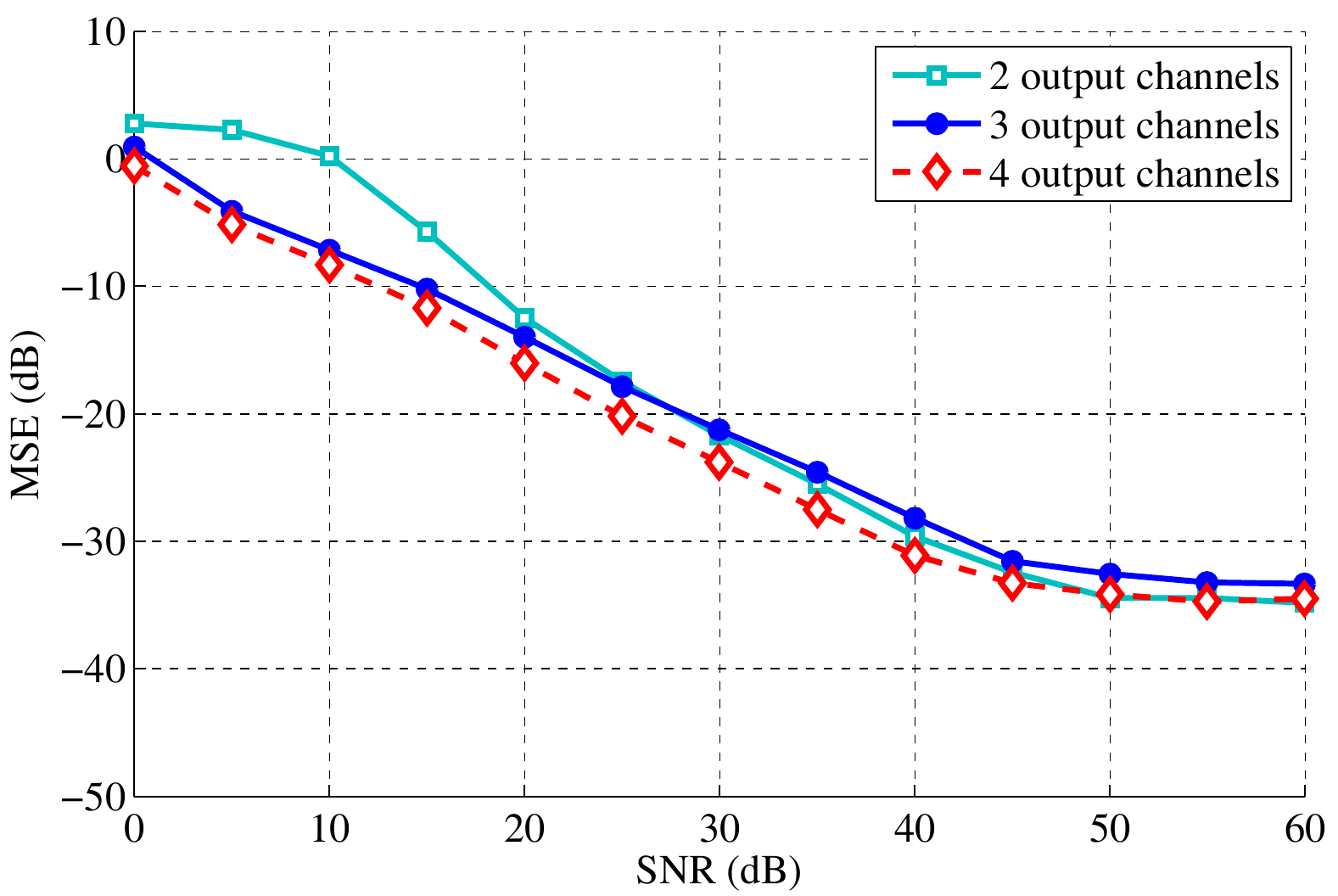}
\end{center}
\caption{Equalization MSE of AKCCA on SIMO Wiener systems with a colored input signal.} \label{fig:example4_mse}
\end{figure}

\subsubsection{Binary input signal}

Finally, the test is repeated on a system with a binary input, $s[n] \in \{-1,+1\}$. The obtained BER values are shown in Fig.~\ref{fig:example5_ber}. We include the results for a system that uses a fifth channel as an additional subchannel, $\h_5$ in this case. As can be observed, the performance of the two-channel system is substantially improved by adding a third subchannel. By including additional subchannels, furthermore, the performance keeps improving, though slightly less for each channel added.



\begin{figure}[t]
\begin{center}
\includegraphics[width=21pc]{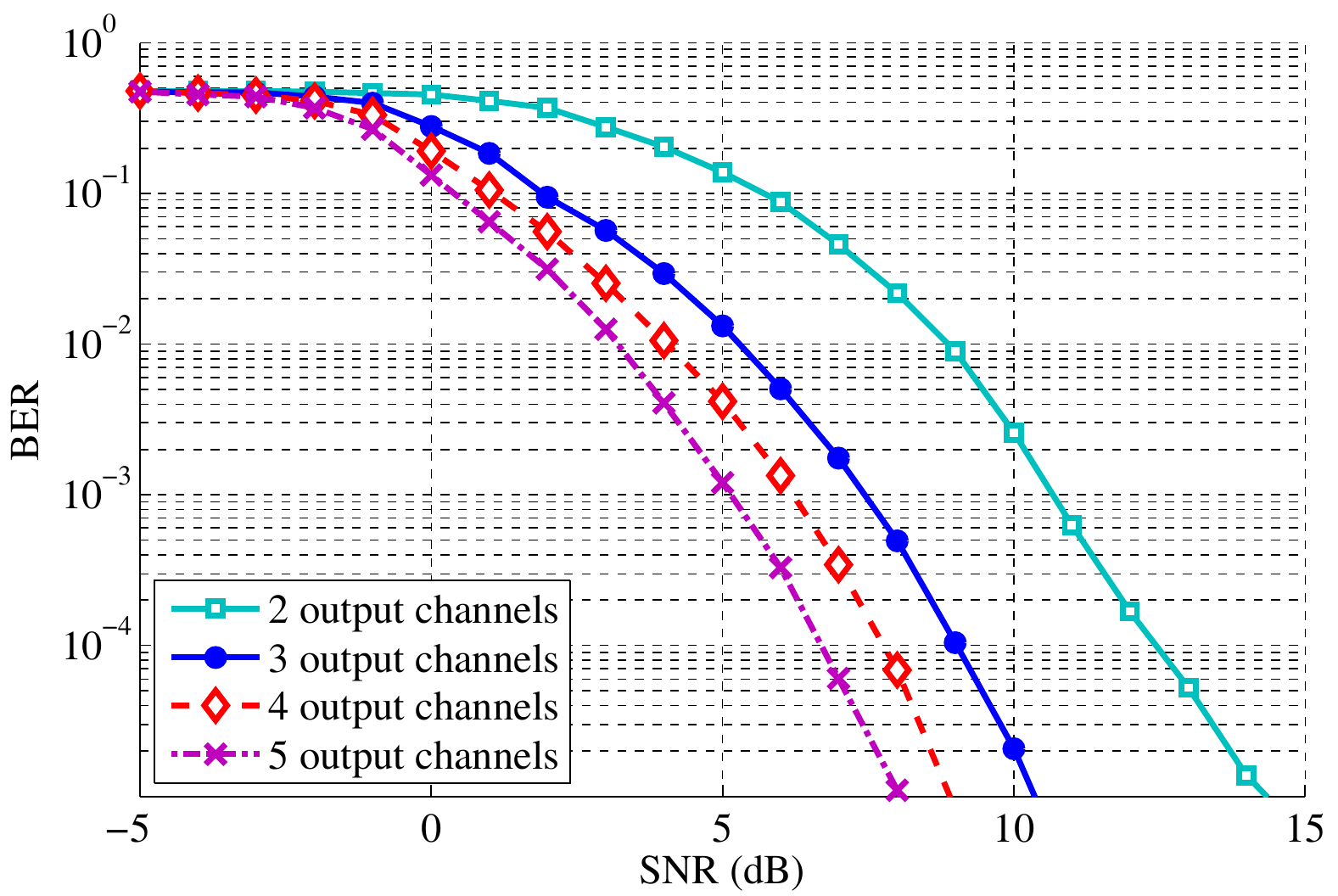}
\end{center}
\caption{Equalization BER of AKCCA on SIMO Wiener systems with a binary i.i.d. input signal.} \label{fig:example5_ber}
\end{figure}


\section{Conclusions}
\label{sec:conclusions}

We have considered the problems of blind identification and blind equalization of SIMO Wiener systems. These systems are multi-channel Wiener systems that are excited by a single shared source signal. By applying a kernel transformation to the output data, we have obtained a linearized optimization problem for which we have proposed an iterative algorithm. The proposed KCCA algorithm iterates between a CCA algorithm to estimate the linear channels and a KCCA algorithm to estimate the memoryless nonlinearities. While we have indicated that blind identification of SIMO Wiener systems is not possible in general for finite source signals, we have also shown that identifiability becomes possible in practice by posing some smoothness constraints on the nonlinearities.

We have provided a general formulation of the proposed technique that allows to operate on systems with two or more output channels. It performs well on reasonably small data sets and can handle systems with colored input signals. Results also show that it has fast convergence and it achieves identification performance that is very close to a related supervised method.

Several directions for future research are open. First, it would be interesting to perform a theoretical analysis in the presence of noise on this problem. Also, some of the ideas proposed in this paper may be applied to other block-based systems such as multiple-input multiple-output (MIMO) Wiener systems and configurations with Hammerstein systems.

\section*{Acknowledgment}
The authors would like to thank the anonymous reviewers for their suggestions on the identifiability of SIMO Wiener systems, and for their recommendations of the choice of the linear channels. This work was supported by MICINN (Spanish Ministry for Science and Innovation) under grants 
TEC2010-19545-C04-03 (COSIMA) and CONSOLIDER-INGENIO 2010 CSD2008-00010 (COMONSENS).

\bibliographystyle{IEEEtran}
\bibliography{biblio}

\begin{thebibliography}{10}
\providecommand{\url}[1]{#1}
\csname url@samestyle\endcsname
\providecommand{\newblock}{\relax}
\providecommand{\bibinfo}[2]{#2}
\providecommand{\BIBentrySTDinterwordspacing}{\spaceskip=0pt\relax}
\providecommand{\BIBentryALTinterwordstretchfactor}{4}
\providecommand{\BIBentryALTinterwordspacing}{\spaceskip=\fontdimen2\font plus
\BIBentryALTinterwordstretchfactor\fontdimen3\font minus
  \fontdimen4\font\relax}
\providecommand{\BIBforeignlanguage}[2]{{%
\expandafter\ifx\csname l@#1\endcsname\relax
\typeout{** WARNING: IEEEtran.bst: No hyphenation pattern has been}%
\typeout{** loaded for the language `#1'. Using the pattern for}%
\typeout{** the default language instead.}%
\else
\language=\csname l@#1\endcsname
\fi
#2}}
\providecommand{\BIBdecl}{\relax}
\BIBdecl

\bibitem{Billings80}
S.~Billings, ``Identification of nonlinear systems: A survey,''
  \emph{Proceedings of IEE, Part D}, vol. 127, pp. 272--285, 1980.

\bibitem{Sjoberg95nonlinearblackbox}
J.~Sj\"{o}berg, Q.~Zhang, L.~Ljung, A.~Benveniste, B.~Deylon, P.-Y. Glorennec,
  H.~Hjalmarsson, and A.~Juditsky, ``Nonlinear black-box modeling in system
  identification: a unified overview,'' \emph{Automatica}, vol.~31, pp.
  1691--1724, 1995.

\bibitem{Nelles00}
O.~Nelles, \emph{Nonlinear System Identification}.\hskip 1em plus 0.5em minus
  0.4em\relax Berlin, Germany: Springer-Verlag, 2000.

\bibitem{giannakis01bibliography}
G.~Giannakis and E.~Serpedin, ``A bibliography on nonlinear system
  identification,'' \emph{Signal Processing}, vol.~81, no.~3, pp. 533--580,
  2001.

\bibitem{Schetzen80}
M.~Schetzen, \emph{The {Volterra} and {Wiener} Theories of Nonlinear
  Systems}.\hskip 1em plus 0.5em minus 0.4em\relax New York, NY, USA: Krieger
  Publishing Co., Inc., 1980.

\bibitem{mathews00_polynomial}
V.~J. Mathews and G.~L. Sicuranza, \emph{Polynomial signal processing}.\hskip
  1em plus 0.5em minus 0.4em\relax New York: Wiley, 2000.

\bibitem{billings77wiener}
S.~Billings and S.~Fakhouri, ``Identification of nonlinear systems using the
  wiener model,'' \emph{Electronics Letters}, vol.~13, no.~17, pp. 502--504, 18
  1977.

\bibitem{Pawlak07}
M.~Pawlak, Z.~Hasiewicz, and P.~Wachel, ``On nonparametric identification of
  {Wiener} systems,'' \emph{IEEE Transactions on Signal Processing}, vol.~55,
  no.~2, pp. 482--492, Feb. 2007.

\bibitem{narendra66hammerstein}
K.~Narendra and P.~Gallman, ``An iterative method for the identification of
  nonlinear systems using a {Hammerstein} model,'' \emph{IEEE Transactions on
  Automatic Control}, vol.~11, no.~3, pp. 546--550, Jul 1966.

\bibitem{billings79hammerstein}
S.~A. Billings and S.~Y. Fakhouri, ``Nonlinear system identification using the
  {Hammerstein} model,'' \emph{International Journal of System Sciences},
  vol.~10, no.~5, pp. 567--578, June 1979.

\bibitem{Bai98}
E.-W. Bai, ``An optimal two stage identification algorithm for
  {H}am-merstein-{W}iener nonlinear systems,'' in \emph{Proceedings of the 1998
  American Control Conference}, vol.~5, Jun 1998, pp. 2756--2760.

\bibitem{Bershad01}
N.~Bershad, P.~Celka, and S.~McLaughlin, ``Analysis of stochastic gradient
  identification of {Wiener-Hammerstein} systems for nonlinearities with
  hermite polynomial expansions,'' \emph{IEEE Transactions on Signal
  Processing}, vol.~49, no.~5, pp. 1060--1072, May 2001.

\bibitem{debrabanter2008wienerhammerstein}
K.~De~Brabanter, P.~Dreesen, P.~Karsmakers, K.~Pelckmans, J.~De~Brabanter,
  J.~Suykens, B.~De~Moor, and H.~Sint-Lieven, ``Fixed-size ls-svm applied to
  the wiener-hammerstein benchmark,'' in \emph{Proc. IFAC Symposium on System
  Identification}, vol. 2008, 2008.

\bibitem{westwick98nonlinear}
D.~Westwick and R.~Kearney, ``Nonparametric identification of nonlinear
  biomedical systems, part 1: Theory,'' \emph{Critical Reviews in Biomedical
  Engineering}, vol.~26, pp. 153--226, 1998.

\bibitem{greblicki04wiener}
W.~Greblicki, ``Nonlinearity recovering in {Wiener} system driven with
  correlated signal,'' \emph{IEEE Transactions on Automatic Control}, vol.~49,
  no.~10, pp. 1805--1812, Oct. 2004.

\bibitem{feher83}
K.~Feher, \emph{Digital Communications: Satellite/Earth Station
  Engineering}.\hskip 1em plus 0.5em minus 0.4em\relax Englewood Cliffs, N.J.:
  Prentice-Hall, 1983.

\bibitem{sands93}
N.~P. Sands and J.~M. Cioffi, ``Nonlinear channel models for digital magnetic
  recording,'' \emph{IEEE Transactions on Magnetism}, vol.~29, pp. 3996--3998,
  Nov. 1993.

\bibitem{Kawakami07_wiener}
R.~{Kawakami Harrop {Galv\~ao}}, S.~Hadjiloucas, A.~Izhac, V.~Becerra, and
  J.~Bowen, ``Wiener-system subspace identification for mobile wireless mm-wave
  networks,'' \emph{IEEE Transactions on Vehicular Technology}, vol.~56, no.~4,
  pp. 1935--1948, Jul. 2007.

\bibitem{pajunen92wiener}
G.~A. Pajunen, ``Adaptive control of {W}iener type nonlinear systems,''
  \emph{Automatica}, vol.~28, no.~4, pp. 781--785, 1992.

\bibitem{cousseau07wiener}
J.~Cousseau, J.~Figueroa, S.~Werner, and T.~Laakso, ``Efficient nonlinear
  {Wiener} model identification using a complex-valued simplicial canonical
  piecewise linear filter,'' \emph{IEEE Transactions on Signal Processing},
  vol.~55, no.~5, pp. 1780--1792, May 2007.

\bibitem{vanvaerenbergh08_kcca_wiener_id}
S.~Van~Vaerenbergh, J.~V{\'i}a, and I.~Santamar{\'i}a, ``Adaptive kernel
  canonical correlation analysis algorithms for nonparametric identification of
  {Wiener} and {Hammerstein} systems,'' \emph{EURASIP Journal on Advances in
  Signal Processing}, vol. 1, Article ID 875351, 13 pages, Apr. 2008.

\bibitem{giri2010block}
F.~Giri and E.-W. Bai, \emph{Block-oriented nonlinear system
  identification}.\hskip 1em plus 0.5em minus 0.4em\relax Springer, 2010.

\bibitem{westwick1996mimowiener}
D.~Westwick and M.~Verhaegen, ``Identifying mimo wiener systems using subspace
  model identification methods,'' \emph{Signal Processing}, vol.~52, no.~2, pp.
  235--258, 1996.

\bibitem{janczak2007mimowiener}
A.~Janczak, ``Instrumental variables approach to identification of a class of
  mimo wiener systems,'' \emph{Nonlinear Dynamics}, vol.~48, no.~3, pp.
  275--284, 2007.

\bibitem{fan2009mimowiener}
D.~Fan and K.~Lo, ``{Identification for disturbed MIMO Wiener systems},''
  \emph{Nonlinear Dynamics}, vol.~55, no.~1, pp. 31--42, 2009.

\bibitem{Gomez07_blind_wiener}
J.~G\'omez and E.~Baeyens, ``Subspace-based blind identification of {IIR Wiener
  systems},'' in \emph{Proceedings of the 15th European Signal Processing
  Conference ({EUSIPCO})}, Pozna{\'n}, Poland, Sep. 2007.

\bibitem{vanbeylen09blindwiener}
L.~Vanbeylen, R.~Pintelon, and J.~Schoukens, ``Blind maximum-likelihood
  identification of {W}iener systems,'' \emph{IEEE Transactions on Signal
  Processing}, vol.~57, no.~8, pp. 3017--3029, Aug. 2009.

\bibitem{TalebJutten01_blindwiener}
A.~Taleb, J.~Sol\'{e}, and C.~Jutten, ``Quasi-nonparametric blind inversion of
  {Wiener} systems,'' \emph{IEEE Transactions on Signal Processing}, vol.~49,
  no.~5, pp. 917--924, May 2001.

\bibitem{giannakis97_blind_eq_volterra}
G.~Giannakis and E.~Serpedin, ``Linear multichannel blind equalizers of
  nonlinear {FIR} volterra channels,'' \emph{IEEE Transactions on Signal
  Processing}, vol.~45, no.~1, pp. 67--81, Jan. 1997.

\bibitem{valcarce01_blind_eq_nl}
R.~L\'opez-Valcarce and S.~Dasgupta, ``Blind equalization of nonlinear channels
  from second-order statistics,'' \emph{IEEE Transactions on Signal
  Processing}, vol.~49, no.~12, pp. 3084--3097, Dec. 2001.

\bibitem{xu95_ls}
G.~Xu, H.~Liu, L.~Tong, and T.~Kailath, ``A least-squares approach to blind
  channel identification,'' \emph{IEEE Transactions on Signal Processing},
  vol.~43, no.~12, pp. 2982--2993, Dec. 1995.

\bibitem{Via06_effectiveorder}
J.~V{\'\i}a, I.~Santamar{\'\i}a, and J.~P\'{e}rez, ``Effective channel order
  estimation based on combined identification / equalization,'' \emph{IEEE
  Transactions on Signal Processing}, vol.~54, no.~9, pp. 3518--3526, Sep.
  2006.

\bibitem{vanvaerenbergh08_kccablind}
S.~Van~Vaerenbergh, J.~V\'ia, and I.~Santamar\'ia, ``A kernel canonical
  correlation analysis algorithm for blind equalization of oversampled {Wiener}
  systems,'' in \emph{IEEE Workshop on Machine Learning for Signal Processing
  (MLSP 2008)}, oct 2008, pp. 20--25.

\bibitem{proakis2007}
J.~G. Proakis and M.~Salehi, \emph{Digital communications, Fifth
  Edition}.\hskip 1em plus 0.5em minus 0.4em\relax McGraw-Hill, New York, 2007.

\bibitem{Hotelling36}
H.~Hotelling, ``Relations between two sets of variates,'' \emph{Biometrika},
  vol.~28, pp. 321--377, 1936.

\bibitem{via07cca_eq}
J.~V\'ia, I.~Santamar\'ia, and J.~P\'erez, ``Deterministic {CCA}-based
  algorithms for blind equalization of {FIR}-{MIMO} channels,'' \emph{IEEE
  Transactions on Signal Processing}, vol.~55, no.~7, pp. 3867--3878, July
  2007.

\bibitem{Chen95_Parallel_Nonlinear_Systems}
H.-W. Chen, ``Modeling and identification of parallel nonlinear systems:
  structural classification and parameter estimation methods,''
  \emph{Proceedings of the IEEE}, vol.~83, no.~1, pp. 39--66, Jan 1995.

\bibitem{Vapnik95nature}
V.~N. Vapnik, \emph{The Nature of Statistical Learning Theory}.\hskip 1em plus
  0.5em minus 0.4em\relax New York, NY, USA: Springer-Verlag New York, Inc.,
  1995.

\bibitem{scholkopf98_kev}
B.~Sch\"olkopf, A.~Smola, and K.-R. M\"uller, ``Nonlinear component analysis as
  a kernel eigenvalue problem,'' \emph{Neural Computation}, vol.~10, no.~5, pp.
  1299--1319, 1998.

\bibitem{scholkopf01_representer}
B.~Sch\"{o}lkopf, R.~Herbrich, and A.~J. Smola, ``A generalized representer
  theorem,'' in \emph{Proceedings of the 14th Annual Conference on
  Computational Learning Theory and 5th European Conference on Computational
  Learning Theory (COLT '01/EuroCOLT '01)}.\hskip 1em plus 0.5em minus
  0.4em\relax London, UK: Springer-Verlag, 2001, pp. 416--426.

\bibitem{micchelli2006universal}
C.~Micchelli, Y.~Xu, and H.~Zhang, ``Universal kernels,'' \emph{Journal of
  Machine Learning Research}, vol.~7, pp. 2651--2667, 2006.

\bibitem{bezdek03alternating}
J.~C. Bezdek and R.~J. Hathaway, ``Convergence of alternating optimization,''
  \emph{Neural, Parallel and Scientific Computation}, vol.~11, no.~4, pp.
  351--368, 2003.

\bibitem{stoica04cyclic}
P.~Stoica and Y.~Selen, ``Cyclic minimizers, majorization techniques, and the
  expectation-maximization algorithm: a refresher,'' \emph{IEEE Signal
  Processing Magazine}, vol.~21, no.~1, pp. 112--114, Jan 2004.

\bibitem{Hardoon04_CCA}
D.~R. Hardoon, S.~Szedmak, and J.~Shawe-Taylor, ``Canonical correlation
  analysis: An overview with application to learning methods,'' \emph{Neural
  Computation}, vol.~16, no.~12, pp. 2639--2664, Dec. 2004.

\bibitem{Williams01_nystrom}
C.~Williams and M.~Seeger, ``Using the {N}ystr\"{o}m method to speed up kernel
  machines,'' in \emph{Proceedings of the 13th Annual Conference on Neural
  Information Processing Systems (NIPS 2000)}, Whistler, BC, Canada, Dec. 2000,
  pp. 682--688.

\bibitem{smola2000sparse}
A.~J. Smola and B.~Sch\"{o}kopf, ``Sparse greedy matrix approximation for
  machine learning,'' in \emph{Proceedings of the Seventeenth International
  Conference on Machine Learning}, ser. ICML '00.\hskip 1em plus 0.5em minus
  0.4em\relax San Francisco, CA, USA: Morgan Kaufmann Publishers Inc., 2000,
  pp. 911--918.

\bibitem{Bach02kernelica}
F.~R. Bach and M.~I. Jordan, ``Kernel independent component analysis,''
  \emph{Journal of Machine Learning Research}, vol.~3, pp. 1--48, 2002.

\bibitem{vanvaerenbergh2010kernel}
S.~Van~Vaerenbergh, ``Kernel methods for nonlinear identification, equalization
  and separation of signals,'' Ph.D. dissertation, University of Cantabria,
  Santander, Spain, 2010.

\bibitem{Silverman1986}
B.~W. Silverman, \emph{Density estimation for Statistics and Data
  Analysis}.\hskip 1em plus 0.5em minus 0.4em\relax London, UK: {Chapman \&
  Hall/CRC}, April 1986.

\end{thebibliography}

\begin{IEEEbiographynophoto}
{Steven Van Vaerenbergh (S'06–M'11)} received the M.Sc. degree in Electrical Engineering from Ghent University, Ghent, Belgium, in 2003, and the Ph.D. degree from the University of Cantabria, Santander, Spain, in 2010. He has been a Visiting Researcher with the Computational NeuroEngineering Laboratory, University of Florida, Gainesville. Currently, Dr. Van Vaerenbergh is a post-doctoral associate with the Department of Telecommunication Engineering, University of Cantabria, Spain. His main research interests include machine learning and its applications to adaptive filtering, target tracking and system identification.
\end{IEEEbiographynophoto}

\begin{IEEEbiographynophoto}
{Javier V\'ia (M'08-SM'12)} received his Telecommunication Engineer Degree and his Ph.D. in electrical engineering from the University of Cantabria, Spain in 2002 and 2007, respectively. In 2002 he joined the Department of Communications Engineering, University of Cantabria, Spain, where he is currently Associate Professor. He has spent visiting periods at the Smart Antennas Research Group of Stanford University, at the Department of Electronics and Computer Engineering (Hong Kong University of Science and Technology), and at the State University of New York at Buffalo. Dr. V\'ia has actively participated in several European and Spanish research projects. His current research interests include blind channel estimation and equalization in wireless communication systems, multivariate statistical analysis, quaternion signal processing and kernel methods.
\end{IEEEbiographynophoto}

\begin{IEEEbiographynophoto}
{Ignacio Santamar\'ia (M'96, SM'05)} received his Telecommunication Engineer Degree and his Ph.D. in electrical engineering from the Universidad Polit\'ecnica de Madrid (UPM), Spain, in 1991 and 1995, respectively. In 1992 he joined the Department of Communications Engineering, University of Cantabria, Spain, where he is currently Full Professor. He has more than 150 publications in refereed journals and international conference papers and holds 2 patents. His current research interests include signal processing algorithms for multi-user multi-antenna wireless communication systems, multivariate statistical techniques and machine learning theories. He has been involved in numerous national and international research projects on these topics. He has been a visiting researcher at the Computational NeuroEngineering Laboratory (University of Florida), and at the Wireless Networking and Communications Group (University of Texas at Austin).

Dr. Santamar\'ia was a Technical Co-Chair of the 2nd International ICST Conference on Mobile Lightweight Wireless Systems (MOBILIGHT 2010), Special Sessions Co-Chair of the 2011 European Signal Processing Conference (EUSIPCO 2011), and General Co-Chair of the 2012 IEEE Workshop on Machine Learning for Signal Processing (MLSP 2012). Since 2009 he has been a member of the IEEE Machine Learning for Signal Processing Technical Committee. Currently, he serves as Associate Editor of the IEEE Transactions on Signal Processing. He was a co-recipient of the 2008 EEEfCOM Innovation Award, as well as coauthor of a paper that received the 2012 IEEE SPS Young Author Best Paper Award.
\end{IEEEbiographynophoto}

\end{document}